\newcolumntype{C}[1]{>{\centering\arraybackslash}p{#1}}
\newtheorem{theorem}{Theorem}[section]
\newtheorem{lemma}[theorem]{Lemma}
\newtheorem{proposition}[theorem]{Proposition}
\newtheorem{corollary}[theorem]{Corollary}
\newtheorem{assumption}[theorem]{Assumption}
\theoremstyle{definition}
\newtheorem{definition}[theorem]{Definition}
\newtheorem{example}[theorem]{Example}
\theoremstyle{remark}
\newtheorem{remark}[theorem]{Remark}
\numberwithin{equation}{section}
\DeclareMathOperator{\integers}{{\mathbb{Z}}}
\DeclareMathOperator{\nn}{\mathbb{N}}
\DeclareMathOperator{\rr}{\mathbb{R}}
\DeclareMathOperator{\complex}{\mathbb{C}}
\DeclareMathOperator{\hamilton}{\mathbb{{\textbf{H}}}}
\DeclareMathOperator{\Lhamilton}{\mathbb{{\textbf{L}}}}
\DeclareMathOperator{\LhamiltonZ}{\mathbb{\textbf{L}}_{\integers}}
\DeclareMathOperator{\jacobi}{\textbf{J}}
\DeclareMathOperator{\Plax}{\textbf{P}}
\DeclareMathOperator{\PlaxZ}{\textbf{P}_{\integers}}
\DeclareMathOperator{\PlaxG}{\textbf{P}_{G}}
\DeclareMathOperator{\Id}{\textbf{Id}}
\DeclareMathOperator{\proj }{\textbf{Proj}}
\DeclareMathOperator{\pp}{{\widetilde{\pii}}}
\newcommand{\pii}{{\ensuremath{\mbox{\small$\Pi$\hskip.012em\llap{$\Pi$}\hskip.012em\llap{$\Pi$}\hskip.012em\llap{$\Pi$}}}}}
\renewcommand{\pii}{{\ensuremath{\mbox{$\Pi$\hskip.012em\llap{$\Pi$}\hskip.012em\llap{$\Pi$}\hskip.012em\llap{$\Pi$}}}}}
\begin{document}


\title{Hamiltonian systems, Toda lattices, Solitons, Lax Pairs on weighted $\integers$-graded graphs}


\author{Gamal Mograby}
\address{Gamal Mograby, Mathematics Department, University of Connecticut, Storrs, CT 06269, USA}
\email{gamal.mograby@uconn.edu}

\author{Maxim Derevyagin}
\address{Maxim Derevyagin, Mathematics Department, University of Connecticut, Storrs, CT 06269, USA}
\email{maksym.derevyagin@uconn.edu}

\author{Gerald V. Dunne}
\address{Gerald V. Dunne, Mathematics \& Physics Department, University of Connecticut, Storrs, CT 06269, USA}
\email{gerald.dunne@uconn.edu}

\author{Alexander Teplyaev}
\address{Alexander Teplyaev, Mathematics \& Physics Department, University of Connecticut, Storrs, CT 06269, USA}
\email{alexander.teplyaev@uconn.edu}

\subjclass[2010]{81Q35, 81P45, 94A40, 05C50, 28A80, 37K40, 70H09}

\date{\today}

\keywords{Toda lattice; Solitons; Lax Pairs; Hamiltonian systems; Completely integrable systems; Graphs}

\begin{abstract}
{We consider discrete one dimensional nonlinear equations and present the procedure of lifting them to} 
$\integers$-graded graphs. We identify conditions which allow one 
to lift one dimensional solutions  to solutions on graphs. In particular, we prove the existence of  solitons {for static potentials} on graded fractal graphs.  
{We also show that even for  a simple example of a topologically interesting  graph the corresponding non-trivial Lax pairs and associated unitary transformations 
 do not lift to a Lax pair on the $\integers$-graded graph. }
\tableofcontents\end{abstract}
\maketitle

\section{Introduction}

In this paper we extend to the realm of {\it nonlinear} differential equations our recent investigations of the rich interplay between Brownian motion, Laplacians, energy forms and geometries beyond one-dimensional graphs. Even for linear problems, there are remarkable connections between fractal-like geometries and spectral problems, and here we begin to explore similar mathematical relations for nonlinear systems. The natural place to start is with the theory of {\it integrable} nonlinear equations, for which the Toda lattice is  { one of the well studied important cases.} Integrable equations are of great interest in mathematics and in physical applications 
\cite{shabat1972exact,ablowitz1991solitons,agrawal2013nonlinear,NMPZ,FaddeevTakhtajan}.

For one-dimensional integrable models, integrability may be characterized in several  inter-related ways, and here we adopt the Lax pair and Hamiltonian system formulations. An important technical challenge is to define a fractal first derivative operator, which naturally appears in Lax pairs for one-dimensional systems.
We analyze the lifting (in the sense defined below) of one-dimensional integrable non-linear differential equations to  $\integers$-graded graphs with fractal properties. We prove that this lifting procedure leads to the existence of static soliton solutions, but show that there is a natural obstruction to the preservation of the underlying Lax pair structure, except in a projected subspace.  

 In \cite{2019arXiv190908668D,mograby2020spectra} we showed that a certain class of fractal-type graphs demonstrates favorable geometrical properties when used as ambient spaces for quantum systems. 
 {Briefly speaking, we considered a transversal layer structure of finite graphs (for more details, see \cite[Definition 2.1]{mograby2020spectra}). In fact, similar concepts can be found in \cite{Fomin,Stanley} under the name ``graded graphs'' or in \cite[page 76]{MR2316893} as a ``stratification'' of a graph}.  {Here we apply these constructions to fractal-like graphs.} 
We interpret the graphs in question as qubit networks, in which each vertex represents a spin. The spin-spin interaction is described by a given Hamiltonian operator. The transversal decomposition of such a graph provides an auxiliary  $1D$ chain. We interpret this $1D$ chain in a similar manner as a $1D$ qubit chain, where the spin-spin interaction is described by a given Jacobi matrix.
In  \cite{2019arXiv190908668D,mograby2020spectra} we introduced a technique on how to relate the qubit network on the graph and the auxiliary $1D$ chain. By doing so, we were able to reduce some quantum systems on graphs to one-dimensional solvable models. 
These methods have been applied successfully to analyze perfect quantum state transfer on graphs. In particular, we gave an algorithm to design Hamiltonians on a large class of graphs, such that perfect quantum state transfer is achieved. This result was  established by reducing the perfect quantum state transfer to the auxiliary $1D$ chain and subsequently lifting the known results of the one-dimensional case to the graph. 

\begin{figure}[th]
\centering
\includegraphics{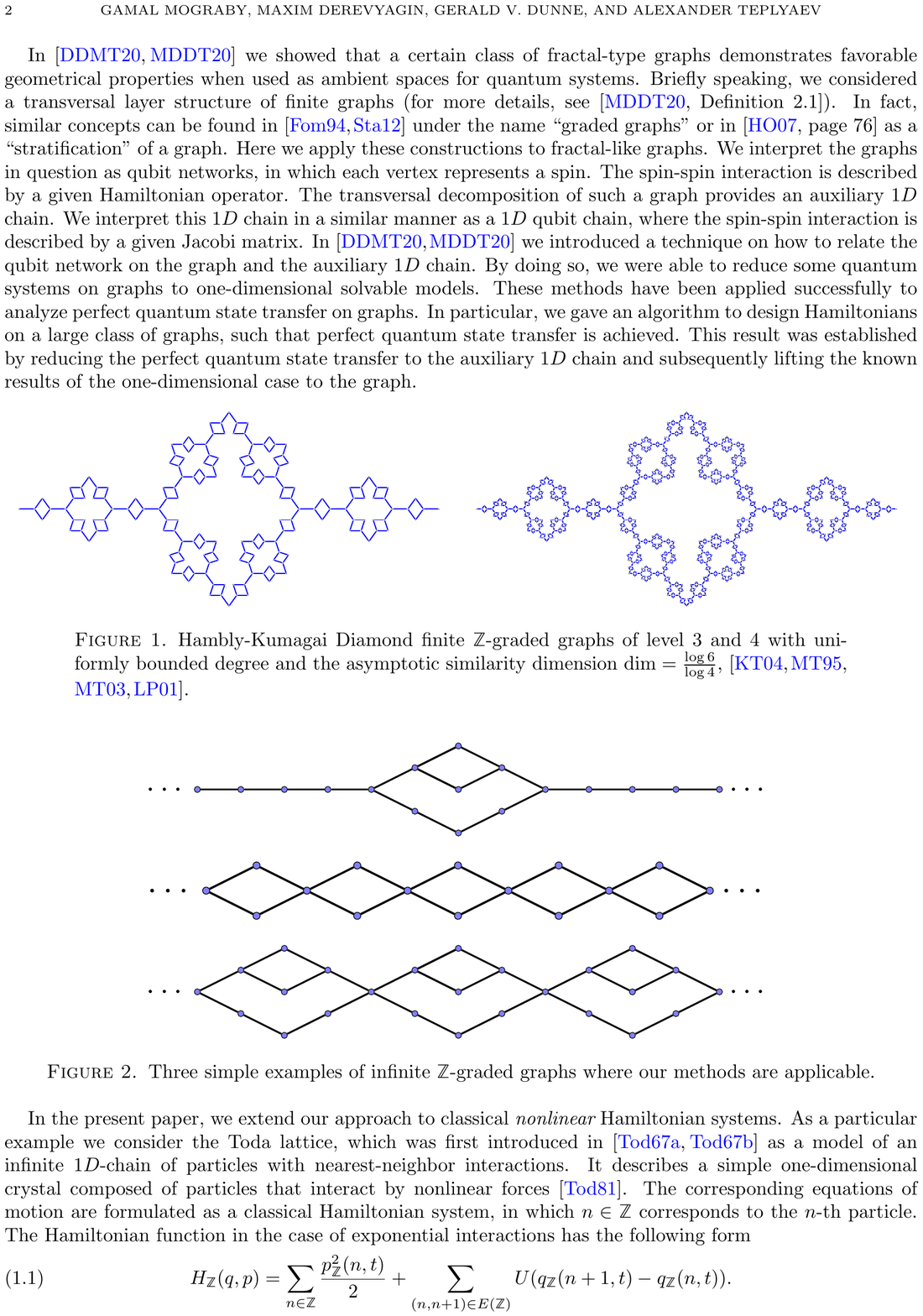}
	\caption{Hambly-Kumagai  Diamond finite   $\integers$-graded graphs   of level 3 and 4 with uniformly bounded degree  and the asymptotic similarity dimension $\text{dim}=\frac{\log6}{\log4}$, \cite{KT,MT,MT2,LP}.}
	\label{fig:Hambly-Kumagai}
\end{figure}
\begin{figure}[thb]
	\centering
	\includegraphics{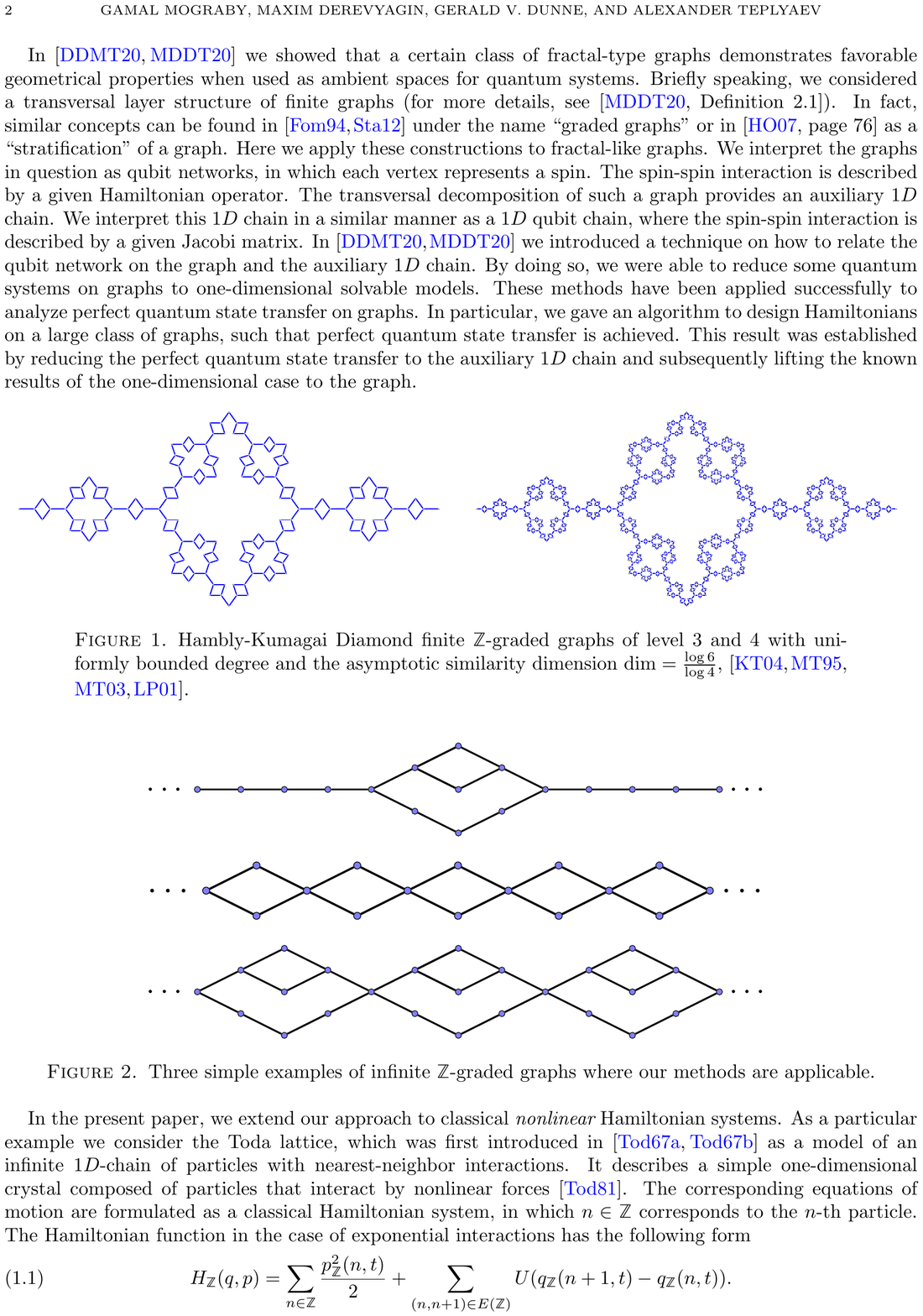}
	\caption{Three simple examples of infinite $\integers$-graded graphs where  our methods are applicable. }
	\label{fig:simple}
\end{figure}

In the present paper, we extend our approach to classical {\it nonlinear} Hamiltonian systems. As a particular example we consider the Toda lattice, which was first introduced in \cite{doi:10.1143/JPSJ.22.431,doi:10.1143/JPSJ.23.501} as a model of an infinite $1D$-chain of particles with nearest-neighbor interactions. It describes a simple one-dimensional crystal composed of particles that interact by nonlinear forces \cite{nla.cat-vn2968918}. The corresponding equations of motion are formulated as a classical Hamiltonian system, in which  $n \in \integers$ corresponds to the $n$-th particle. The Hamiltonian function in the case of exponential interactions has the following form
\begin{equation}
\label{eq:oneDHamilForToda}
H_{\integers}(q,p) = \sum_{n \in \integers}  \frac{p_{\integers}^2(n,t) }{2}   +   \sum_{(n,n+1 ) \in E(\integers)}  U( q_{\integers}(n+1,t)-q_{\integers}(n,t)).
\end{equation}
where $q(n,t)$ defines the displacement  at the moment $t$ of the $n$-th particle from its equilibrium position, $p(n,t)$ represents the momentum and $U(r) = e^{-r} -1$ is a potential function, which accounts for the exponential interactions. Note that the Hamiltonian function  defined in (\ref{eq:oneDHamilForToda}) is a rescaled variant of the original version introduced by M. Toda (for more details, see\cite[Chapter 12]{TeschlbookJacobi}). Toda's seminal papers in 1967 generated broad interest in both mathematics as well as physics communities. Flaschka \cite{PhysRevB.9.1924}, Henon \cite{PhysRevB.9.1921} and Moser \cite{Moser1975} proved the complete integrability of the Toda lattice (nonperiodic and periodic Toda lattices). Moreover, Flaschka \cite{PhysRevB.9.1924} and Manakov \cite{manakov} introduced a change of variables and thereby reformulated the equations of motions of a Toda lattice as a Lax Pair \cite{doi:10.1002/cpa.3160210503}.  This gives rise to a time-evolution on the vector space of real, symmetric, tridiagonal matrices which preserves the spectrum of the initial data. Another prominent feature of Toda lattices, which is connected to complete integrability, is the existence of soliton solutions. 

For graphs in our paper we understand the notion of solitons in a wide sense, i.e., as nonlinear waves that tend to constant values at infinity and have certain stability properties. Note that our graphs may not be translationally invariant, and therefore we are not defining solitons as translational invariant solutions. 
However, if we assume that the graph is translationally invariant with a period $T$, and a naturally defined radial soliton 
is moving with a speed $c$, then this soliton   keeps its profile after time shift by $T/c$ and the space shift by $T$ because of Theorem~\ref{thm:liftToradailSolution}. Two examples of the periodic graphs with periods $T=2$ and $T=4$ are the bottom two graphs in Figure~\ref{fig:simple}. It is not possible to get a generalization of Lax pairs  because of reasoning in Theorem~\ref{noLaxTheorem} (see also  Section~\ref{non-Abelian} about a relation to the non-Abelian Toda lattice).

We denote a graph by $G=(V(G), E(G))$, where $V(G)$ and $E(G)$ represent the set of vertices and directed edges, respectively. We equip the set of vertices and edges with measures
\begin{equation*}
  \mu_V: V(G) \to (0, \infty), \ x     \mapsto \mu_V(x) ,\quad \quad  \quad  \mu_E: E(G) \to (0, \infty), \  (x,y)  \mapsto \mu_E(x,y).
\end{equation*}
The potential energy term in the Hamiltonian function will be expressed in terms of the graph edges. In this way, the particle interactions are reflected in the graph adjacency relations. For a potential, we set a function $U:\rr \to \rr$, which is assumed to be twice continuously differentiable, i.e., $U \in C^2(\rr)$. A Hamiltonian function on a $\integers$-graded graph $G$ is then formally defined as
\begin{equation}
\label{hamiltonOnGraphForIntro}
H_G(q_t,p_t) = \sum_{x \in V(G)}  \frac{p^2(x,t) }{2\mu_V(x)}   +   \sum_{(x,y ) \in E(G)} \mu_E(x,y) U( \partial q_t(x,y)), 
\end{equation}
where $\partial q_t(x,y) := q(y,t) - q(x,t) , \ (x,y) \in E(G)$, see Definition \ref{def:Classical Hamiltonian}. {Note that $\integers$-graded graphs are defined at the beginning of section  \ref{sec-Z}.}
Having introduced a Hamiltonian function on a   graph, we then compute the corresponding equations of motion. Subsequently, we rewrite the equations of motion as a flow in an appropriately chosen Banach space. By doing so, and under the assumption that there exists $ \delta >0$ such that $\mu_V(x) \geq \delta $ for all $x \in V(G)$, we  prove local in time existence and uniqueness of solutions for the equations of motion in Theorem \ref{existanceUniqueTheorem}. We note that the considerations so far are true for general graphs as there was no particular use of the $\integers$-graded graph definition.
An essential aspect of $\integers$-graded graphs, see Figures~\ref{fig:Hambly-Kumagai} and \ref{fig:simple}, is that they admit a natural ``radial direction'', which gives rise to the transversal decomposition and the auxiliary $1D$ chain. We equip the $\integers$-graded graph and the associated $1D$ chain with the Hamiltonian functions (\ref{hamiltonOnGraphForIntro}) and (\ref{eq:oneDHamilForToda}), respectively.
One of our main results in section \ref{sec-CHS} is Theorem \ref{thm:liftToradailSolution}, which states that under the Assumption \ref{necessaryAssumptionforRadialSol} on the measures  we can lift any solution corresponding to the Hamiltonian function (\ref{eq:oneDHamilForToda}) to a radial solution corresponding to the Hamiltonian function  (\ref{hamiltonOnGraphForIntro})  in the sense of Definition~\ref{def:liftToRadialSol}.  

Theorem \ref{thm:liftToradailSolution} provides a way to find  soliton solutions on $\integers$-graded graphs. Note that this result is very general and does not make any explicit reference to the type of the potential $U$, except the regularity assumption, $U \in C^2(\rr)$. Another implication of Theorem \ref{thm:liftToradailSolution} is that the radial solutions are locally stable in time, i.e.,  if the initial value is taken in the subspace of radial functions, then the corresponding solutions stay in this subspace. The stability of radial solutions will be used in the last section when investigating the Lax pair formalism on $\integers$-graded graphs. In Section~\ref{sec-Toda}, we specialize to the case of Toda lattices. Using known results about the one-dimensional Toda lattice, Corollary \ref{globalSolutionsAndRadial} extends the result in Theorem \ref{thm:liftToradailSolution} and provides a global radial solution, i.e., it exists for all $ \ t \in \rr$.  In particular, an N-soliton solution  can be lifted with the same argument to a radial solution on a $\integers$-graded graph $G$. 

The last section, Section \ref{sec:LiftingLaxPair}, deals with the Lax pair formalism on $\integers$-graded graphs. The intuition behind our approach is the following. We transform the equations of motion for the $1D$ chain $\integers$ via the Flaschka’s variables, see Definition \ref{FlaschkaVariables}. We obtain an equivalent system of equations, which is a time evolution given by a Lax pair $\{\PlaxZ  (t), \LhamiltonZ(t)\}$, namely
\begin{equation}
\label{eq:firstLaxEqForIntr}
\frac{d}{dt} \LhamiltonZ(t) = [\PlaxZ  (t), \LhamiltonZ(t)]:=\PlaxZ  (t) \LhamiltonZ(t)- \LhamiltonZ(t)\PlaxZ  (t), \quad t \in \rr,
\end{equation}
where $ \LhamiltonZ(t)$ is an infinite Jacobi matrix and  $\PlaxZ  (t) := [\LhamiltonZ(t)]_{+} - [\LhamiltonZ(t)]_{-}$, such that  $[\LhamiltonZ(t)]_{+}$  and $ [\LhamiltonZ(t)]_{-}$ define the upper and lower triangular parts of $\LhamiltonZ(t)$ respectively.  In Theorem \ref{thm:LiftingTheorem} we proved that under some assumptions each Jacobi matrix could be lifted to an operator on a $\integers$-graded graph. We call such an operator a lifted Jacobi matrix and denote it by $\Lhamilton(t)$. We raise the question of whether we can find a skew-adjoint operator $\PlaxG (t)$ acting on $G$ such that 
\begin{equation}
\frac{d}{dt} \Lhamilton (t) = [\PlaxG (t) ,\  \Lhamilton(t) ].
\end{equation}
In such a  case  we can imply that the pair $\{\PlaxG (t), \Lhamilton(t)\}$ satisfies the isospectral property. We constructed an example showing that this is not always possible, see Theorem \ref{noLaxTheorem}.

Our work is part of a long term study of  mathematical physics on fractals and self-similar graphs 
\cite{v1,v2,ADT09,ADT10,ABDTV12,ACDRT,Akk,Dunne12,hanoi,HM19}, in which novel features of quantum processes on fractals can be associated with the unusual spectral and geometric properties of fractals compared to regular graphs and smooth manifolds.

The paper is organized as follows. 
In Section~\ref{sec-Z}   we develop  {the technique of lifting} operators from a $1D$ chain to $\integers$-graded graphs. 
Section~\ref{sec-CHS}   introduces classical Hamiltonian systems on  $\integers$-graded graphs and  provides existence and uniqueness statements for the equations of motion. 
In Section~\ref{sec-Toda}   some of the known results of the one-dimensional Toda lattice are lifted to $\integers$-graded graphs. 
In Section~\ref{sec:LiftingLaxPair}   we provide an example showing that a Lax pair formalism can not be lifted in general to a $\integers$-graded graph $G$.

\section{Lifting Operators to a $\integers$-Graded Graph}\label{sec-Z}
In this section, we mostly extend the machinery developed in previous work  \cite{2019arXiv190908668D,mograby2020spectra} to the case of infinite graphs.
We follow \cite{Fomin,Stanley,MR941434} and define a $\integers$-graded graph. We refer to the triple  $G = (V(G),E(G),  \pii )$ as a $\integers$-graded graph provided that
\begin{enumerate}
	\item  $(V(G),E(G))$ is a connected graph with a countable vertex set $V(G)$ and an edge set $E(G)$,
	\item $ \pii : V(G) \to \integers$ is a rank function,
	\item for an edge $(x,y) \in E(G)$, we have $ \pii (y)= \pii (x)+  1$. 
\end{enumerate}
This definition is illustrated in Figures~\ref{fig:Hambly-Kumagai}, \ref{fig:simple} and \ref{fig:simplestZgradedGraph}. 
\begin{figure}[htb]
	\centering
	\includegraphics{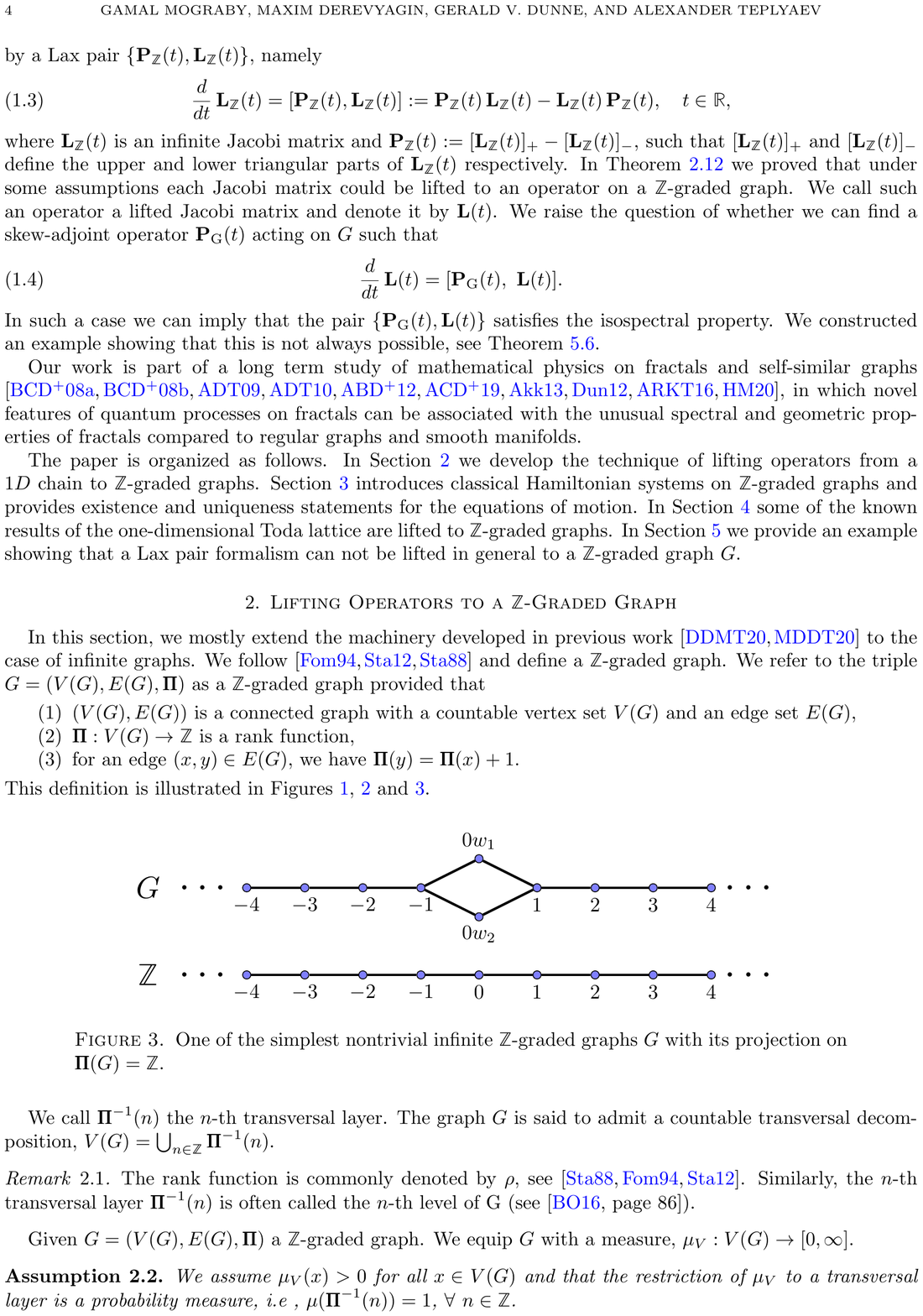}
	\caption{One of the simplest nontrivial infinite $\integers$-graded graphs $G$ with its projection on $ \pii (G)=\integers$.}
	\label{fig:simplestZgradedGraph}
\end{figure}

We call $  \pii ^{-1}(n) $ the $n$-th transversal layer. The graph $G$ is said to admit a countable transversal decomposition, $V(G) = \bigcup_{n \in \integers }   \pii ^{-1}(n) $.
\begin{remark}
The  rank function is commonly denoted by $\rho$, see \cite
{MR941434,
	Fomin,Stanley}. Similarly, the $n$-th transversal layer $  \pii ^{-1}(n) $ is often called the $n$-th  level of G  (see \cite[page 86]{borodin_olshanski_2016}). 
\end{remark}
Given $G = (V(G),E(G),  \pii )$ a $\integers$-graded graph. We equip $G$ with a measure, $\mu_V: V(G) \to [0, \infty]$. 
\begin{assumption}
\label{def:measure}
We assume $\mu_V(x)>0$ for all $x \in V(G)$ and that the restriction of $\mu_V$ to a transversal layer is a probability measure, i.e , $\mu(  \pii ^{-1}(n))=1$, $\forall \  n \in \integers$.
\end{assumption}
We showed in \cite{2019arXiv190908668D,mograby2020spectra} that a transversal decomposition $ V(G) = \bigcup_{n}   \pii ^{-1}(n) $ relates the graph $G$  in a natural way to an auxiliary 1D chain. The 1D chain in the case of a $\integers$-graded graph is simply the set of integers $\integers$ considered as a path graph  $(V(\integers), E(\integers))$ with the vertex set $V(\integers) = \integers$  and an edge set $E(\integers) =\{(n-1,n) : \ n \in \integers \}$. The graph $G$ and the associated $1D$ chain $\integers$ are equipped with the following Hilbert spaces.

\begin{eqnarray}
\label{key-inner}
 \ell^2(G)=\{\psi:V(G) \to \complex \ | \ \bra{\psi}\ket{\psi}_G  < \infty \}, \quad	& \quad  \bra{\psi}\ket{\varphi}_G = \sum_{x \in V(G)} \psi(x) \overline{\varphi(x)} \mu_V(x), \\
 \ell^2(\integers)=\{ \psi: \integers \to \complex | \bra{\psi}\ket{\psi}  < \infty \},\quad & \bra{\psi}\ket{\varphi} = \sum_{n \in \integers} \psi(n) \overline{\varphi(n)}.
\end{eqnarray}

Similar to the set of vertices, the $\integers$-graded graph structure induces a natural transversal decomposition of the set of edges. We equip the set of edges with a measure $\mu_E: E(G) \to [0, \infty], \  (x,y)  \mapsto \mu_E(x,y)$. 
\begin{assumption}
\label{def:EdgesMeasure}
We assume $\mu_E(x,y)>0$ for all $(x,y) \in E(G)$ and that the restriction of $\mu_E$ to a transversal layer of edges is a probability measure.
\end{assumption}
We will consider functions defined on edges and therefore introduce the following Hilbert spaces.
\begin{eqnarray*}
 \ell^2(E(G))=\{\psi:E(G) \to \complex \ | \ \bra{\psi}\ket{\psi}_E  < \infty \}, & \   \bra{\psi}\ket{\varphi}_E = \sum_{(x,y) \in E(G)} \psi(x,y) \overline{\varphi(x,y)} \mu_E(x,y), \\
 \ell^2(E(\integers))=\{ \psi: E(\integers) \to \complex | \bra{\psi}\ket{\psi}  < \infty \},\quad & \bra{\psi}\ket{\varphi} = \sum_{(n,n+1) \in E(\integers)} \psi(n,n+1) \overline{\varphi(n,n+1)}.
\end{eqnarray*}

Let $X$ be either  $V(G)$, $V(\integers)$, $E(G)$ or $E(\integers)$, we denote the Banach spaces of bounded functions on $X$ by

\begin{equation*}
 \ell^{\infty}(X)=\{\psi: X \to \complex \ | \    ||\psi||_{\ell^{\infty}(X)}  < \infty \}, \quad	||\psi||_{\ell^{\infty}(X)} := \sup_{x \in X} |\psi(x)|.
\end{equation*}
\begin{definition}
\label{def:radProjAvg}
Given $G = (V(G),E(G),  \pii )$ a $\integers$-graded graph.  The subspace of radial functions is defined as
$ \ell^2_{rad}(G)=\{\psi \in \ell^2(G) \ \ | \  \psi(x)=\psi(y) \text{ if }  \pii (x)= \pii (y) \}$. We denote the projection of $\ell^2(G)$ onto $\ell^2_{rad}(G)$ by $\proj  : \ell^2(G) \to \ell^2_{rad}(G)$. In addition, we define the averaging operator as the following mapping

\begin{equation*}
  \pp : \ell^2(G)  \to  \ell^2(\integers), \quad  \psi \mapsto    \pp \psi(n):=\sum_{x \in  \pii ^{-1}(n)}  \psi(x) \mu_V(x).
 \end{equation*}
\end{definition}
\begin{remark}
A simple computation shows that $\proj  $ is an orthogonal projection. 
\end{remark}
\begin{proposition}
\label{propForaveraging}
The averaging operator $ \pp $ is bounded with $|| \pp || = 1$. Let $ \pp ^{\ast} $ be the adjoint operator of $ \pp $, then $ \pp ^{\ast}$ is given by $ \pp ^{\ast}: \ell^2(\integers)  \to  \ell^2(G)$, $ \ \varphi  \mapsto  \pp ^{\ast} \varphi (x) = \varphi( \pii (x))$. Moreover, $ \ell^2_{rad}(G)$ is a closed subspace of $\ell^2(G)$.
\end{proposition}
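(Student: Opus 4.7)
The plan is to handle the three claims in order, using in a crucial way Assumption~\ref{def:measure} that $\mu_V$ restricted to each layer $\pii^{-1}(n)$ is a probability measure.

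First I would bound $\pp$ by a layerwise Cauchy--Schwarz argument. Fix $n \in \integers$; viewing $\pii^{-1}(n)$ as a finite (or countable) probability space with weights $\mu_V$, the estimate
\begin{equation*}
|\pp\psi(n)|^2 = \Bigl|\sum_{x \in \pii^{-1}(n)} \psi(x)\,\mu_V(x)\Bigr|^2 \leq \Bigl(\sum_{x \in \pii^{-1}(n)} |\psi(x)|^2\,\mu_V(x)\Bigr)\Bigl(\sum_{x \in \pii^{-1}(n)} \mu_V(x)\Bigr)
\end{equation*}
follows from Cauchy--Schwarz, and the second factor equals $1$. Summing over $n$ gives $\|\pp\psi\|^2 \leq \|\psi\|_G^2$, so $\|\pp\| \leq 1$. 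For the reverse inequality I would test on the indicator function $\psi = \mathbb{1}_{\pii^{-1}(n_0)}$ of any single layer: then $\|\psi\|_G^2 = \mu_V(\pii^{-1}(n_0)) = 1$ and $\pp\psi(n_0) = 1$ with $\pp\psi$ vanishing elsewhere, giving $\|\pp\psi\| = 1$.

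Next I would identify $\pp^{\ast}$ by unrolling the inner products. For $\psi \in \ell^2(G)$ and $\varphi \in \ell^2(\integers)$,
\begin{equation*}
\langle \pp\psi,\varphi\rangle = \sum_{n \in \integers} \Bigl(\sum_{x \in \pii^{-1}(n)} \psi(x)\mu_V(x)\Bigr)\overline{\varphi(n)} = \sum_{x \in V(G)} \psi(x)\,\overline{\varphi(\pii(x))}\,\mu_V(x),
\end{equation*}
which is $\langle \psi, \pp^{\ast}\varphi\rangle_G$ with $\pp^{\ast}\varphi(x) = \varphi(\pii(x))$, provided the pullback lies in $\ell^2(G)$. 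That in turn follows from a direct computation:
\begin{equation*}
\|\pp^{\ast}\varphi\|_G^2 = \sum_{n \in \integers} |\varphi(n)|^2 \sum_{x \in \pii^{-1}(n)} \mu_V(x) = \sum_{n \in \integers} |\varphi(n)|^2 = \|\varphi\|^2,
\end{equation*}
again using Assumption~\ref{def:measure}. This also reconfirms $\|\pp\| = \|\pp^{\ast}\| = 1$.

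Finally, for closedness of $\ell^2_{rad}(G)$ I would give two independent arguments and keep whichever reads more cleanly. The slick one is to note that the formula $\pp^{\ast}\varphi(x) = \varphi(\pii(x))$ shows that the range of $\pp^{\ast}$ consists exactly of functions constant on each layer, i.e.\ $\operatorname{ran}(\pp^{\ast}) = \ell^2_{rad}(G)$; since $\pp^{\ast}$ is an isometry its range is automatically closed. Alternatively, $\ell^2_{rad}(G)$ is the intersection over all pairs $x,y$ with $\pii(x) = \pii(y)$ of the kernels of the continuous functionals $\psi \mapsto \psi(x) - \psi(y)$ (continuity using $\mu_V(x), \mu_V(y) > 0$), hence closed. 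No step is genuinely difficult; the only place that requires care is keeping track of the normalization, and the mild subtlety that $\pp^{\ast}\varphi$ must be shown to belong to $\ell^2(G)$ before the adjoint identity is meaningful, which is precisely where the probability-measure hypothesis pays off.
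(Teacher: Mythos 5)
Your proof is correct, and it diverges from the paper's in two places worth noting. For the norm bound the two arguments are essentially identical: your layerwise Cauchy--Schwarz and the paper's Jensen inequality are the same estimate dressed differently (both hinge on $\mu_V(\pii^{-1}(n))=1$), and both saturate it with the indicator of a single layer. For the adjoint, the paper simply cites \cite[Lemma 2.4]{mograby2020spectra}, whereas you unroll the inner products and additionally verify that $\pp^{\ast}$ is an isometry; this makes the argument self-contained and the isometry observation pays off immediately. The real difference is in the closedness of $\ell^2_{rad}(G)$: the paper runs a hands-on sequential argument (a radial sequence converging in $\ell^2(G)$ converges pointwise on each layer because $\mu_V>0$, so the limit is radial), while you obtain closedness structurally, either as the range of the isometry $\pp^{\ast}$ (which requires, and you correctly note, that $\operatorname{ran}(\pp^{\ast})$ is \emph{all} of $\ell^2_{rad}(G)$ --- the square-summability of the layer constants again coming from the probability normalization; this is in effect Lemma~\ref{usefulProp1}(1), which the paper proves separately) or as an intersection of kernels of the continuous functionals $\psi\mapsto\psi(x)-\psi(y)$. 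Both of your routes are valid and arguably cleaner; the paper's sequential argument is more elementary and uses only $\mu_V>0$ rather than the full normalization, but your isometry argument packages the closedness together with facts the paper needs anyway in Lemma~\ref{usefulProp1}. The only point to be careful about, which you flag yourself, is checking $\pp^{\ast}\varphi\in\ell^2(G)$ before invoking the adjoint identity; your norm computation handles this.
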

\begin{proof}
Using $| \pp \psi(n)| \leq \sum_{x \in  \pii ^{-1}(n)}  |\psi(x)| \mu_V(x)$ and Jensen inequality, we imply
\begin{equation*}
| \pp \psi(n)|^2 \leq  \left( \sum_{x \in  \pii ^{-1}(n)}  |\psi(x)| \mu_V(x) \right)^2 \leq \sum_{x \in  \pii ^{-1}(n)}  |\psi(x)|^2 \mu_V(x).
\end{equation*}
Hence, $|| \pp \psi||^2 = \bra{ \pp \psi}\ket{ \pp \psi} = \sum_{n \in \integers} | \pp \psi(n)|^2 \leq  \sum_{n \in \integers}\sum_{x \in  \pii ^{-1}(n)}  |\psi(x)|^2 \mu_V(x) = \bra{\psi}\ket{\psi}_G = ||\psi||_G^2 $ and $|| \pp || \leq 1$. Equality holds by choosing $\psi$, for example, to be one on a single transversal layer and zero elsewhere. The second statement follows from \cite[Lemma 2.4,  page 3]{mograby2020spectra}. We prove now that $ \ell^2_{rad}(G)$ is a closed subspace of $\ell^2(G)$. Let $\psi \in \ell^2(G)$ and $\{\psi_m\}_{m \in \nn} \subset \ell^2_{rad}(G)$ such that $\lim_{m \to \infty}||\psi_m - \psi||_G=0$. By the nonnegativity of the terms, we have 

\begin{equation}
\label{eq:subspaceClosedness1}
\lim_{m \to \infty} \sum_{x \in  \pii ^{-1}(n)} |\psi_m(x) - \psi(x)|^2 \mu_V(x) = 0, \quad \forall n \in \integers.
\end{equation}
Recall Assumption \ref{def:measure}, we have $\mu_V(x) > 0$ for all $x \in V(G)$. Equation (\ref{eq:subspaceClosedness1}) implies for a fixed transversal layer $n \in \integers$, 
\begin{equation}
\lim_{m \to \infty}|\psi_m(x) - \psi(x)|= 0, \quad \forall x \in  \pii ^{-1}(n).
\end{equation}

Let $x, y \in  \pii ^{-1}(n)$, note that $\psi_m(x) = \psi_m(y)$ for all $m \in \nn$. We observe

\begin{equation}
\label{eq:subspaceClosedness2}
|\psi(x) - \psi(y)| \leq |\psi(x) - \psi_m(x)|+|\psi(y) - \psi_m(y)| ,\quad \forall m \in \nn.
\end{equation}
Taking the limit of inequality (\ref{eq:subspaceClosedness2}) gives $\psi(x) = \psi(y)$. Hence $\psi \in \ell^2_{rad}(G)$.
\end{proof}
We will use the following lemma frequently. 
\begin{lemma}
\label{usefulProp1}
Let $\Id_{\ell^2(\integers)}: \ell^2(\integers) \to \ell^2(\integers)$ be the identity operator on $\ell^2(\integers)$. Then
\begin{enumerate}
	\item The range of $ \pp ^{\ast}$ is $\ell^2_{rad}(G)$.
	\item $ Ker  \pp  = \ell^2_{rad}(G)^{\bot}$ and $ Ker  \pp ^{\bot} = \ell^2_{rad}(G)$
	\item $ \pp   \pp ^{\ast} =\Id_{\ell^2(\integers)}$
	\item $ \pp ^{\ast}  \pp  = \proj   $
\end{enumerate}
\end{lemma}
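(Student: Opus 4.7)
The plan is to establish the four items in sequence, exploiting the normalization $\mu_V(\pii^{-1}(n))=1$ on every transversal layer, which is exactly what makes the averaging and lifting maps dual to each other. First I would verify (3) by direct computation: for $\varphi\in\ell^2(\integers)$,
$$\pp\pp^{\ast}\varphi(n)=\sum_{x\in\pii^{-1}(n)}\varphi(\pii(x))\,\mu_V(x)=\varphi(n)\,\mu_V(\pii^{-1}(n))=\varphi(n).$$
This identity is the technical engine driving the other three statements.

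Next, I would prove (1). The inclusion $\mathrm{Range}(\pp^{\ast})\subseteq\ell^2_{rad}(G)$ is immediate because $\pp^{\ast}\varphi(x)=\varphi(\pii(x))$ depends only on the layer of $x$. Conversely, given $\psi\in\ell^2_{rad}(G)$ I would define $\varphi(n):=\psi(x)$ for any $x\in\pii^{-1}(n)$; radiality makes this unambiguous, and the probability-measure condition gives
$$\|\varphi\|^2=\sum_{n\in\integers}|\varphi(n)|^2=\sum_{n\in\integers}|\varphi(n)|^2\,\mu_V(\pii^{-1}(n))=\sum_{n\in\integers}\sum_{x\in\pii^{-1}(n)}|\psi(x)|^2\mu_V(x)=\|\psi\|_G^2,$$
so $\varphi\in\ell^2(\integers)$ and $\pp^{\ast}\varphi=\psi$ by construction. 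Statement (2) is then a general Hilbert-space consequence: boundedness of $\pp$ from Proposition \ref{propForaveraging} yields $\mathrm{Ker}(\pp)=\overline{\mathrm{Range}(\pp^{\ast})}^{\perp}$, and combining (1) with the closedness of $\ell^2_{rad}(G)$ (also from Proposition \ref{propForaveraging}) identifies this with $\ell^2_{rad}(G)^{\perp}$; taking one more orthogonal complement gives the second equality.

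Finally, for (4) I would verify that $\pp^{\ast}\pp$ satisfies the three defining properties of the orthogonal projection onto $\ell^2_{rad}(G)$. Self-adjointness is automatic from $(\pp^{\ast}\pp)^{\ast}=\pp^{\ast}\pp$. Idempotency follows from
$$(\pp^{\ast}\pp)^{2}=\pp^{\ast}(\pp\pp^{\ast})\pp=\pp^{\ast}\pp$$
using (3). By (1) the range is contained in $\ell^2_{rad}(G)$, and for any $\psi=\pp^{\ast}\varphi\in\ell^2_{rad}(G)$ one more application of (3) gives $\pp^{\ast}\pp\psi=\pp^{\ast}\varphi=\psi$, so $\pp^{\ast}\pp$ acts as the identity on $\ell^2_{rad}(G)$ and hence equals $\proj$.

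The argument is essentially formal once (3) is in place, so there is no real obstacle; the only point requiring care is the well-definedness and $\ell^2$-integrability of the $\varphi$ built from a radial $\psi$ in (1), both of which hinge crucially on Assumption \ref{def:measure} that $\mu_V$ restricts to a probability measure on every layer. Without that normalization, $\pp\pp^{\ast}$ would merely be a multiplication operator on $\ell^2(\integers)$ rather than the identity, and the clean projection identity in (4) would fail.
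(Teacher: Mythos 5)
Your proof is correct and follows the same route the paper takes: the paper simply cites the finite-graph analogue in \cite[Lemma 2.5]{mograby2020spectra} for parts (1), (3), (4) and notes that the second equality in (2) follows from the closedness of $\ell^2_{rad}(G)$ established in Proposition \ref{propForaveraging}, which is exactly the double-orthogonal-complement step you use. Your write-up just spells out the direct computations (the layer normalization $\mu_V(\pii^{-1}(n))=1$ giving $\pp\pp^{\ast}=\Id$, the kernel--range duality, and the characterization of $\proj$ as a self-adjoint idempotent with the right range) that the paper delegates to that reference.
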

\begin{proof}
It follows by a similar argument for the case of finite graphs in \cite[Lemma 2.5,  page 4]{mograby2020spectra}. The second statement of part (2) holds by  $ Ker  \pp ^{\bot} = \overline{\ell^2_{rad}(G)}=\ell^2_{rad}(G)$ and Proposition \ref{propForaveraging}.
\end{proof}
The following considerations are relevant for lifting the Lax pair formalism from the auxiliary 1D chain to a $\integers$-graded graph (see section \ref{sec:LiftingLaxPair}). Let   $\{a(n,n+1) \}_{(n,n+1) \in E(\integers)} \in \ell^{\infty}(E(\integers))$  and $\{b(n) \}_{n) \in V(\integers)} \in \ell^{\infty}(E(\integers))$. We denote an \textit{infinite Jacobi matrix} by $\jacobi$
\begin{equation}
\label{eq:infiniteJacobiMatrix-}
\jacobi:    \ell^{2}(\integers) \to \ell^{2}(\integers), \quad   \quad 
	  \varphi(n)  \mapsto a(n,n+1)  \varphi(n+1) + b(n)  \varphi(n) + a(n-1,n)  \varphi(n-1),
\end{equation}
We require that both sequences are real-valued and $a(n,n+1) \neq 0$ for all $n \in \integers$. The Jacobi matrix $\jacobi$ is bounded and self-adjoint on $\ell^2(\integers)$.
\begin{assumption}[{Measure Balance Assumption}]
\label{necessaryAssumptionforRadialSol}
Let $x \in V(G)$. The following identities for the measures hold:
\begin{equation*}
\mu_V(x) =   \sum_{ (x,y ) \in E(G)} \mu_E(x,y), \quad \quad
\mu_V(x)  =  \sum_{(y,x ) \in E(G)} \mu_E(y,x).
\end{equation*}
\end{assumption}
\begin{remark}
The assumption \ref{necessaryAssumptionforRadialSol} is compatible with the assumptions \ref{def:measure} and \ref{def:EdgesMeasure}, i.e., 
\begin{equation*}
1 = \sum_{x \in  \pii ^{-1}(n)} \mu_V(x) =  \sum_{x \in  \pii ^{-1}(n)}  \sum_{ (x,y ) \in E(G)} \mu_E(x,y) = 1.
\end{equation*}
The assumption  \ref{necessaryAssumptionforRadialSol} arises naturally in the context of  
 \cite{ST19,AR18,AR21}.
\end{remark}
\begin{definition}
\label{def:LiftOperator}
Let $A: \ell^2(G) \to \ell^2(G)$. If $A$ reflects the adjacency relation of the graph $G$, in the sense that its off-diagonal elements are zero whenever they correspond to non-adjacent vertices, then we say \textit{$A$ acts on $G$}. Moreover, we define  
$B: \ell^2(\integers) \to \ell^2(\integers)$ such that $B =   \pp  A  \pp ^{\ast}$. In this case we say, the operator $A$ acts on $G$ and lifts $B$.  
\end{definition}
An operator $\hamilton$ acting on $\ell^2(G)$ is said to satisfy the \textit{Spectral Separation Assumption} if the following holds.
\begin{assumption}[{Spectral Separation Assumption}]
\label{separationAnsatz}
{A bounded linear} operator $\hamilton: \ell^2(G) \to \ell^2(G)$ satisfies 
\begin{enumerate}
	\item $\hamilton (Ker  \pp ) \subset Ker  \pp $, i.e.,   the subspace $Ker  \pp $ is invariant under $\hamilton$,
	\item $\hamilton (\ell^2_{rad}(G)) \subset  \ell^2_{rad}(G)$, i.e.,   the subspace $\ell^2_{rad}(G)$ is invariant under $\hamilton$.
\end{enumerate}
\end{assumption}
\begin{theorem}
\label{thm:LiftingTheorem}
Given $G = (V(G),E(G),  \pii )$ a $\integers$-graded graph. Under the Assumptions  \ref{def:measure}, \ref{def:EdgesMeasure} and \ref{necessaryAssumptionforRadialSol}   for each bounded infinite Jacobi matrix $\jacobi$  there exists an operator $\hamilton$ acting on $G$ in the sense of Definition \ref{def:LiftOperator} such that
\begin{equation}
\label{inducedJ}
\jacobi =   \pp  {\mbox{$\hamilton$}}  \pp ^{\ast}.
\end{equation}
{Moreover, $\hamilton$  can be directly computed with the formula  
\begin{equation} 
\label{eq:equForliftingThm}
 \hamilton(x,y) =
  \begin{cases}
     b(n)     & \quad \text{ if } x=y \text{ and }  \pii (x)=n, \\
    \frac{\mu_E(x,y)}{\mu_V(x)}  a(n,n+1) & \quad \text{ if }  (x, y) \in E(G) \text{ and }  \pii (x)=n, \\
    \frac{\mu_E(y,x)}{\mu_V(x)}  a(n-1,n) & \quad \text{ if }  (y, x) \in E(G) \text{ and }  \pii (y)=n-1 ,\\
    0    &    \quad \text{ otherwise. }
  \end{cases}
\end{equation}
$\hamilton$ defined by (\ref{eq:equForliftingThm}) satisfies the Spectral Separation Assumption \ref{separationAnsatz}.}
\end{theorem}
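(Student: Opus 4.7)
My plan is to take formula \ref{eq:equForliftingThm} as the \emph{definition} of $\hamilton$ and then verify in turn the four required properties: that $\hamilton$ acts on $G$, that it is bounded, that it intertwines to $\jacobi$, and that it satisfies Assumption \ref{separationAnsatz}. The first of these is immediate from the case structure of \ref{eq:equForliftingThm}, since off-diagonal entries vanish unless $x$ and $y$ are adjacent in $G$. For boundedness I would decompose $\hamilton = D + U + L$ into its diagonal, ``raising,'' and ``lowering'' parts across layers. Applying Cauchy--Schwarz to
\[
(U\psi)(x) \;=\; a(\pii(x),\pii(x)+1) \sum_{y:(x,y)\in E(G)} \frac{\mu_E(x,y)}{\mu_V(x)} \psi(y)
\]
and then swapping the order of summation over $x$ and $y$ in $\|U\psi\|_G^2$ gives $\|U\psi\|_G^2 \le \|a\|_\infty^2 \|\psi\|_G^2$ after two successive applications of Assumption \ref{necessaryAssumptionforRadialSol}; the bound for $L$ is analogous, and $\|D\|\le \|b\|_\infty$ is trivial.

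For the identity $\jacobi = \pp\hamilton\pp^{\ast}$ I would compute directly. Given $\varphi\in\ell^2(\integers)$, the function $\pp^{\ast}\varphi(x)=\varphi(\pii(x))$ is constant on layers, so the off-diagonal terms in $\hamilton\pp^{\ast}\varphi(x)$ factor the relevant values of $\varphi$ out of the edge sums, leaving
\[
\sum_{y:(x,y)\in E(G)} \frac{\mu_E(x,y)}{\mu_V(x)} \;=\; 1 \qquad \text{and} \qquad \sum_{y:(y,x)\in E(G)} \frac{\mu_E(y,x)}{\mu_V(x)} \;=\; 1
\]
by Assumption \ref{necessaryAssumptionforRadialSol}. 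Therefore $\hamilton\pp^{\ast}\varphi(x)$ depends on $x$ only through $n=\pii(x)$ and equals $\jacobi\varphi(n)$; applying $\pp$ and using $\sum_{x\in\pii^{-1}(n)}\mu_V(x)=1$ from Assumption \ref{def:measure} then recovers $\jacobi\varphi$.

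The first half of Assumption \ref{separationAnsatz} is a free byproduct of the previous paragraph: by Lemma \ref{usefulProp1}(1) every radial function has the form $\pp^{\ast}\varphi$, and we just saw that $\hamilton\pp^{\ast}\varphi$ is again radial. For the invariance of $Ker\,\pp$ I would first establish that $\hamilton$ is self-adjoint on $\ell^2(G)$: a bookkeeping check shows that each directed edge $(u,v)\in E(G)$ contributes the symmetric term
\[
\mu_E(u,v)\,a(\pii(u),\pii(v))\,\bigl[\,\psi(u)\overline{\varphi(v)}+\psi(v)\overline{\varphi(u)}\,\bigr]
\]
to $\langle\hamilton\psi,\varphi\rangle_G$, once the weight $\mu_V(x)$ in the inner product cancels the denominator in \ref{eq:equForliftingThm}. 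Since $Ker\,\pp = \ell^2_{rad}(G)^\perp$ by Lemma \ref{usefulProp1}(2) and the orthogonal complement of a self-adjoint operator's invariant subspace is itself invariant, the invariance of $Ker\,\pp$ follows.

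The only genuine obstacle in the argument is organizational rather than conceptual: at every step one must reindex sums over directed edges so that one of the two measure-balance identities in Assumption \ref{necessaryAssumptionforRadialSol} precisely cancels the normalization $\mu_E(x,y)/\mu_V(x)$ appearing in the off-diagonal entries of \ref{eq:equForliftingThm}. Once this bookkeeping is set up, none of the verifications requires any ingredient beyond Assumptions \ref{def:measure}, \ref{def:EdgesMeasure}, \ref{necessaryAssumptionforRadialSol} and the elementary properties of $\pp,\pp^{\ast}$ recorded in Proposition \ref{propForaveraging} and Lemma \ref{usefulProp1}.
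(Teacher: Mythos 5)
Your proposal is correct, and on the key point of invariance of $Ker\,\pp$ it takes a genuinely different route from the paper. The paper proves \eqref{inducedJ} essentially as you do (collapsing the edge sums via Assumption \ref{necessaryAssumptionforRadialSol}), but it verifies the Spectral Separation Assumption by hand on spanning vectors of each subspace: indicator functions $n_{rad}$ of a single transversal layer for $\ell^2_{rad}(G)$, and vectors $n_{\bot}$ supported on one layer with $\sum_{x\in\pii^{-1}(n)}n_{\bot}(x)\mu_V(x)=0$ for $Ker\,\pp$, rewriting that orthogonality condition in edge-measure form via Assumption \ref{necessaryAssumptionforRadialSol} to check directly that $\hamilton n_{\bot}\in Ker\,\pp$. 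You instead get the radial invariance for free from the computation behind \eqref{inducedJ} (since $\ell^2_{rad}(G)=\mathrm{range}\,\pp^{\ast}$ by Lemma \ref{usefulProp1}), and then deduce invariance of $Ker\,\pp=\ell^2_{rad}(G)^{\bot}$ from self-adjointness of $\hamilton$ in the weighted inner product \eqref{key-inner} — your edge-by-edge symmetry check is right, as the weight $\mu_V(x)$ cancels the denominators in \eqref{eq:equForliftingThm} and leaves the symmetric weight $\mu_E(u,v)$ on each edge. This is cleaner and makes the second invariance a structural consequence of the first rather than a separate computation (it does rely on $\ell^2_{rad}(G)$ being closed, which Proposition \ref{propForaveraging} supplies). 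You also explicitly verify boundedness of $\hamilton$ via the $D+U+L$ decomposition and Jensen/Cauchy--Schwarz with the probability weights $\mu_E(x,y)/\mu_V(x)$; the paper leaves this implicit, even though Assumption \ref{separationAnsatz} is stated for bounded operators, so this is a worthwhile addition rather than a digression.
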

\begin{proof}
{We define $\hamilton = [\hamilton(x,y)]_{x,y \in V(G)}$ with the matrix elements given in (\ref{eq:equForliftingThm}).} The matrix product $\hamilton  \pp ^{\ast}$  has an intuitive meaning. Namely, it adds all columns corresponding to vertices in the same transversal layer to a single column. For example, let $x \in  \pii ^{-1}(n)$. Recall that the adjacent vertices of $x$ are in $ \pii ^{-1}(n+1)$. In other words, the non-zero off-diagonal elements in the row corresponding to the vertex $x$ are of the form $\frac{\mu_E(x,y)}{\mu_V(x)}  a(n,n+1)$, where $y \in  \pii ^{-1}(n+1)$. We obtain, for the sum

\begin{equation}
\label{measureConditionForProof}
  \sum_{ (x,y ) \in E(G)} \frac{\mu_E(x,y)}{\mu_V(x)}a(n,n+1) =  a(n,n+1),
\end{equation}
where the equality holds, due to assumptions \ref{necessaryAssumptionforRadialSol}. A similar argument using the averaging of rows instead of adding columns proves equation (\ref{eq:equForliftingThm}).  To prove the second statement, we show first that  the subspace $\ell^2_{rad}(G)$ is invariant under the matrix given in (\ref{eq:equForliftingThm}).  Let $\{ n_{rad}(x) \}_{x \in V(G)}$ be a basis vector of $\ell^2_{rad}(G)$, such that the vector components correspond to the vertices in the $n$-th transversal layer $ \pii ^{-1}(n)$ are one and zero elsewhere, i.e., 
\begin{equation}
   n_{rad}(x) =
  \begin{cases}
     1     & \quad \text{ if  $ x \in  \pii ^{-1}(n)$}, \\
    0    &    \quad \text{ otherwise. }
  \end{cases}
\end{equation}
A similar computation as in equation  (\ref{measureConditionForProof}) shows that applying the matrix given in (\ref{eq:equForliftingThm}) on $n_{rad}$ results in the following vector
\begin{equation}
   (\hamilton n_{rad})(x) =
  \begin{cases}
     a(n-1,n)     & \quad \text{ if  $ x \in  \pii ^{-1}(n-1)$}, \\
     b(n)    & \quad \text{ if  $ x \in  \pii ^{-1}(n)$}, \\
     a(n,n+1)    & \quad \text{ if  $ x \in  \pii ^{-1}(n+1)$}, \\
    0    &    \quad \text{ otherwise. }
  \end{cases}
\end{equation}
which is again in $\ell^2_{rad}(G)$. To show that the subspace $Ker  \pp $ is invariant under $\hamilton$, we proceed similarly and consider the vector $\{ n_{\bot}(x) \}_{x \in V(G)}$, given by $n_{\bot}(x)=0$ for all $x \notin   \pii ^{-1}(n)$ and satisfies the following condition
\begin{equation}
\label{orthogonalityCondition}
 \sum_{  x \in  \pii ^{-1}(n) } n_{\bot}(x)  \mu_V(x) =  0.
\end{equation}
The last equation implies that $n_{\bot}$ is orthogonal on $\ell^2_{rad}(G)$, and hence $n_{\bot} \in Ker  \pp $ due to Lemma \ref{usefulProp1}. Due to assumption \ref{necessaryAssumptionforRadialSol}, equation (\ref{orthogonalityCondition}) can be written as follows

\begin{equation}
\label{orthogonalityReformulated}
  \sum_{  y \in  \pii ^{-1}(n) }  \sum_{ (y,x ) \in E(G)} \mu_E(y,x)n_{\bot}(y) =0 , \quad \quad
\sum_{  y \in  \pii ^{-1}(n) } \sum_{(x,y ) \in E(G)} \mu_E(x,y) n_{\bot}(y) =0 .
\end{equation}
Applying the matrix given in (\ref{eq:equForliftingThm}) on $n_{\bot}$ results in the following vector
\begin{equation}
   (\hamilton n_{\bot})(x) =
  \begin{cases}
    \sum_{ (x,y ) \in E(G)}\frac{\mu_E(x,y)}{\mu_V(x)} n_{\bot}(y)  a(n-1,n)      & \quad \text{ if  $ x \in  \pii ^{-1}(n-1)$}, \\
     n_{\bot}(x)  b(n)   & \quad \text{ if  $ x \in  \pii ^{-1}(n)$}, \\
     \sum_{ (y,x ) \in E(G)} \frac{\mu_E(y,x)}{\mu_V(x)} n_{\bot}(y) a(n,n+1)   & \quad \text{ if  $ x \in  \pii ^{-1}(n+1)$}, \\
    0    &    \quad \text{ otherwise. }
  \end{cases}
\end{equation}
Equations (\ref{orthogonalityReformulated}) imply $\hamilton n_{\bot} \in Ker  \pp  $.
\end{proof}
The definition of a $\integers$-graded graph implies that two vertices $x, y \in V(G)$ in the same transversal layer can not be adjacent, i.e., $(x,y) \notin E(G)$. This would contradict $ \pii (y)= \pii (x)+ 1$. In what follows we will need the following mappings.
\begin{definition}
Given $G = (V(G),E(G),  \pii )$ a $\integers$-graded graph we define the following mappings.
\begin{enumerate}
    \item The \textit{left-hand side degree mapping} $ \mathbf{deg}_{-}: V(G) \to \nn$ is defined as follows. Assume that $x \in  \pii ^{-1}(n)$ for some $n \in \integers$, then $\mathbf{deg}_{-}(x)$ assigns the vertex $x$ the number of edges that connect $x$ to vertices in $ \pii ^{-1}(n-1)$.
    \item The \textit{right-hand side degree mapping} $\mathbf{deg}_{+}: V(G) \to \nn$  is defined as follows. Assume $x \in  \pii ^{-1}(n)$ for some $n \in \integers$, then $\mathbf{deg}_{+}(x)$  assigns the vertex $x$ the number of edges that connect $x$ to vertices in  $ \pii ^{-1}(n+1)$.
\end{enumerate}
\end{definition}

In \cite{2019arXiv190908668D,mograby2020spectra} we considered graphs with uniform transversal degrees, which is a particular case of the following assumption.   
\begin{assumption}[{Combinatorics Balance Assumption}]
\label{uniformEdgeMeasureAssumption}
Let $G = (V(G),E(G),  \pii )$ be a $\integers$-graded graph. For $x_1,y_1,x_2,y_2 \in V(G)$ such that both $x_1,y_1$ and $x_2,y_2 $ are adjacent, we set
\begin{align*}
\mu_E(x_1,y_1)=\mu_E(x_2,y_2), \  \text{if   } \  \pii (x_1)= \pii (x_2)  \  \text{ and } \   \pii (y_1)= \pii (y_2).
\end{align*}
\end{assumption}
\begin{corollary}
\label{coroForComputLift}
We impose the additional Assumption \ref{uniformEdgeMeasureAssumption} in Theorem \ref{thm:LiftingTheorem}. Then the lifted Jacobi matrix defined in (\ref{eq:equForliftingThm}) takes the form
\begin{equation*}
\label{eq:liftedAnduniformTransversal}
 \hamilton(x,y) =
  \begin{cases}
     b(n)     & \quad \text{ if } x=y \text{ and }  \pii (x)=n, \\
    \frac{1}{\mathbf{deg}_{+}(x)}  a(n,n+1) & \quad \text{ if }  (x, y) \in E(G) \text{ and }  \pii (x)=n, \\
    \frac{1}{\mathbf{deg}_{-}(x)}  a(n,n+1) & \quad \text{ if }  (y, x) \in E(G) \text{ and }  \pii (y)=n ,\\
    0    &    \quad \text{ otherwise. }
  \end{cases}
\end{equation*}
\end{corollary}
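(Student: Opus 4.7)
The plan is to start directly from formula (\ref{eq:equForliftingThm}) supplied by Theorem \ref{thm:LiftingTheorem}. The diagonal entries $b(n)$ and the vanishing off-diagonal entries are already in the form stated in the corollary, so the entire task reduces to simplifying the two ratios $\mu_E(x,y)/\mu_V(x)$ and $\mu_E(y,x)/\mu_V(x)$ under the additional Combinatorics Balance Assumption \ref{uniformEdgeMeasureAssumption}.

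First I would fix a vertex $x$ with $\pii(x)=n$ and handle the ``forward'' case $(x,y)\in E(G)$ (so $\pii(y)=n+1$). By Assumption \ref{uniformEdgeMeasureAssumption} every edge from $\pii^{-1}(n)$ to $\pii^{-1}(n+1)$ carries one and the same measure; denote it $\alpha_n$. The number of such edges incident to $x$ is $\mathbf{deg}_{+}(x)$ by definition, so Assumption \ref{necessaryAssumptionforRadialSol} gives
\begin{equation*}
\mu_V(x) \;=\; \sum_{(x,z)\in E(G)} \mu_E(x,z) \;=\; \mathbf{deg}_{+}(x)\,\alpha_n,
\end{equation*}
whence $\mu_E(x,y)/\mu_V(x)=1/\mathbf{deg}_{+}(x)$. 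Substituting into the second branch of (\ref{eq:equForliftingThm}) yields exactly $\tfrac{1}{\mathbf{deg}_{+}(x)}a(n,n+1)$, matching the corollary.

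The ``backward'' case is symmetric: if $(y,x)\in E(G)$ with $\pii(y)=n$ and $\pii(x)=n+1$ (the indexing used in the corollary), then by the uniformity all incoming edges to $x$ from $\pii^{-1}(n)$ share a common measure $\beta_n$, and there are $\mathbf{deg}_{-}(x)$ of them. The second identity in Assumption \ref{necessaryAssumptionforRadialSol} gives
\begin{equation*}
\mu_V(x) \;=\; \sum_{(z,x)\in E(G)} \mu_E(z,x) \;=\; \mathbf{deg}_{-}(x)\,\beta_n,
\end{equation*}
hence $\mu_E(y,x)/\mu_V(x)=1/\mathbf{deg}_{-}(x)$, matching the third branch of the corollary after relabeling $n\mapsto n+1$ in the statement of Theorem \ref{thm:LiftingTheorem}.

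No real obstacle is anticipated here: the calculation is bookkeeping that converts weighted edge sums into edge counts once uniformity within a transversal strip is assumed. The only point requiring care is the indexing convention — ensuring that $\mathbf{deg}_{+}$ and $\mathbf{deg}_{-}$ are attached to the correct endpoint of the directed edge so the shifted argument of $a(\cdot,\cdot)$ remains consistent with the formula in Theorem \ref{thm:LiftingTheorem}.
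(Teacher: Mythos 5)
Your proposal is correct and follows essentially the same route as the paper: the paper's proof is exactly the observation that Assumption \ref{uniformEdgeMeasureAssumption} combined with Assumption \ref{necessaryAssumptionforRadialSol} forces $\mu_V(x) = \mathbf{deg}_{+}(x)\,\mu_E(x,y)$ for $(x,y)\in E(G)$ and $\mu_V(x) = \mathbf{deg}_{-}(x)\,\mu_E(y,x)$ for $(y,x)\in E(G)$, after which the ratios in (\ref{eq:equForliftingThm}) collapse to $1/\mathbf{deg}_{\pm}(x)$. You also correctly track the index shift between the theorem's $a(n-1,n)$ branch and the corollary's $a(n,n+1)$ branch, which the paper leaves implicit.
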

\begin{proof}
The statement follows from a direct computation. Note that Assumption \ref{uniformEdgeMeasureAssumption} implies $\mu_V(x) =  \mathbf{deg}_{+}(x) \mu_E(x,y)$, if $ (x, y) \in E(G) $ and $\mu_V(x)  =  \mathbf{deg}_{-}(x) \mu_E(y,x)$ for $ (y, x) \in E(G)$. 
\end{proof}

\section{Classical Hamiltonian Systems on $\integers$-Graded Graphs}\label{sec-CHS}
In this section, we introduce a classical Hamiltonian system on $\integers$-graded graphs. We start with some preliminary notation and assumptions.
\begin{enumerate}
	\item A vertex $x \in V(G)$ corresponds to a particle, and $\mu_V(x)$ represents its mass.
	\item Let $x \in V(G)$, then $q(x,t)$ defines the displacement of the $x$-th particle  at the moment $t$ (displacement from its equilibrium position). We will utilize the notation $q_t$ as an abbreviation for the sequence $\{ q(x,t) \}_{x \in V(G)}$ and assume $q_t \in \ell^{\infty}(V(G)) $ for all $t \in \rr$.
	\item Similarly, let $p(x,t)$ be the momentum  of the $x$-th particle. The sequence $p_t = \{ p(x,t) \}_{x \in V(G)}$ is assumed to be in $ \ell^{\infty}(V(G)) $ for all $t \in \rr$. Note that $p(x,t) = \mu_V(x) \dot{q}(x,t)$, where a superimposed dot indicates a derivative taken with respect to time $t$.
	\item We assume that $ \rr \ni t \mapsto (q_t,p_t) \in  \ell^{\infty}(V(G)) \oplus  \ell^{\infty}(V(G))$ is continuously differentiable.
	\item We introduce the difference expression $\partial : \ell^2(V(G)) \rightarrow  \ell^2(E(G)), \  f \mapsto \partial f$, where $\partial f(x,y) := f(y) - f(x), \ (x,y) \in E(G)$.
\end{enumerate}

\begin{definition}[$U$-potential]
Let $U:\rr \to \rr$  be twice continuously differentiable, i.e., $U \in C^2(\rr)$. The \textit{$U$-potential}  on a $\integers$-graded graph $G$ is defined as the  sequence $U( \partial q_t) := \{  U( \partial q_t(x,y)) \}_{(x,y) \in E(G)}$, i.e.

\begin{equation*}
E(G) \ni (x,y)  \mapsto U( \partial q_t(x,y)) = U(q(y,t)-q(x,t)) 
\end{equation*}
\end{definition}

\begin{remark}
Note that the sequence $U( \partial q_t) = \{  U( \partial q_t(x,y)) \}_{(x,y) \in E(G)}$ is bounded, i.e., $U( \partial q_t) \in \ell^{\infty}(E(G)) $. It follows by the continuity assumption on $U$ and the boundedness of $q_t$.
\end{remark}
\begin{definition}[Classical Hamiltonian]
\label{def:Classical Hamiltonian}
For a given $U$-potential, we define a Hamiltonian function on a $\integers$-graded graph $G$ formally as
\begin{align}
H_G(q_t,p_t) = \sum_{x \in V(G)}  \frac{p^2(x,t) }{2\mu_V(x)}   +   \sum_{(x,y ) \in E(G)} \mu_E(x,y) U( \partial q_t(x,y)), 
\end{align}
\end{definition}
where $(q_t,p_t) \in \ell^{\infty}(V(G)) \oplus  \ell^{\infty}(V(G))$.
We compute the corresponding equations of motion and obtain

\begin{eqnarray}
\label{eq:equationsOfMotion1}
\quad  \dot{p}(x,t) =- \frac{\partial H_G(q_t,p_t)}{\partial q(x,t)} = \sum_{(x,y ) \in E(G)} \mu_E(x,y) U'(\partial  q_t(x,y))   - \sum_{(y,x ) \in E(G)} \mu_E(y,x) U'(\partial q_t(y,x))   \\
\label{eq:equationsOfMotion2}
\dot{q}(x,t) = \frac{\partial H_G(q_t,p_t)}{\partial p(x,t)} = \frac{p(x,t) }{\mu_V(x)}.
\end{eqnarray}
{To rewrite the equations} of motion (\ref{eq:equationsOfMotion1}) and (\ref{eq:equationsOfMotion2}) as a flow on the Banach space  $\ell^{\infty}(V(G)) \oplus  \ell^{\infty}(V(G))$, we introduce the vector field

\begin{equation}
\label{vectorFieldMain}
 X(q_t, \dot{q}_t) :=  \begin{pmatrix}
   \    \left\{   \dot{q}(x,t)    \right\}_{x \in V(G)}   \\  
       \    \left\{   \sum_{(x,y ) \in E(G)} \frac{\mu_E(x,y) }{\mu_V(x)}U'(\partial  q_t(x,y))   - \sum_{(y,x ) \in E(G)} \frac{\mu_E(y,x)}{\mu_V(x)} U'(\partial q_t(y,x))  \right\}_{x \in V(G)}   
         \end{pmatrix}
\end{equation}
Then, the equations of motion (\ref{eq:equationsOfMotion1}) and (\ref{eq:equationsOfMotion2})  take the form,

\begin{equation}
	\label{dynamicalSystem}
	\frac{d}{dt}
        \begin{pmatrix}
            q_t \\ \dot{q}_t
         \end{pmatrix}
          = X(q_t, \dot{q}_t).
\end{equation}

\begin{remark}
Note that we will assume the existence of a  $ \delta >0$ such that $\mu_V(x) \geq \delta $ for all $x \in V(G)$. Equation  (\ref{eq:equationsOfMotion2}) implies $\dot{q}(x,t)  = \frac{p(x,t) }{\mu_V(x)} \leq \frac{p(x,t) }{\delta}$ and hence  $\dot{q}_t \in \ell^{\infty}(V(G)) $ for all $t \in \rr$. By the continuity assumption on $U'$ we can verify $ X(q_t, \dot{q}_t)  \in  \ell^{\infty}(V(G)) \oplus  \ell^{\infty}(V(G))$.
\end{remark}

\begin{theorem}
\label{existanceUniqueTheorem}
	If we assume that there exists $ \delta >0$ such that $\mu_V(x) \geq \delta $ for all $x \in V(G)$ then, given an initial value $(q_0, \dot{q}_0) \in  \ell^{\infty}(V(G)) \oplus  \ell^{\infty}(V(G)) $, there exists a unique integral curve of $X$ at $(q_0, \dot{q}_0) $, i.e., 	
\begin{equation*}
c: I \to \ell^{\infty}(V(G)) \oplus  \ell^{\infty}(V(G)), \ \  t \mapsto (q_t, \dot{q}_t),
\end{equation*}	
where $I$ is an interval, $0 \in I$, such that $\frac{d}{dt} c(t) = X(c(t))$   for all $t \in I$, and  $\ c(0)  = (q_0, \dot{q}_0)$.
\end{theorem}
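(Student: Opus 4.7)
\medskip

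\noindent\textbf{Proof proposal.} The plan is to interpret equation (\ref{dynamicalSystem}) as an autonomous ODE in the Banach space $B := \ell^{\infty}(V(G)) \oplus \ell^{\infty}(V(G))$ (with the norm $\|(q,\dot q)\| = \max(\|q\|_\infty, \|\dot q\|_\infty)$) and apply the Banach-space version of the Picard--Lindelöf theorem. For this I need only verify two things: (i) $X$ from (\ref{vectorFieldMain}) is a well-defined map $B \to B$; and (ii) $X$ is locally Lipschitz. The unique local integral curve $c: I \to B$ with $c(0) = (q_0, \dot{q}_0)$ then follows from the standard abstract theorem.

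\emph{Step 1 (self-map).} The first component of $X$ is the projection $(q, \dot q) \mapsto \dot q$, so it lies in $\ell^{\infty}(V(G))$ by assumption. For the second component, fix $x \in V(G)$ and observe three bounds. First, by Assumption \ref{def:EdgesMeasure} the measure $\mu_E$ restricted to each transversal layer of edges is a probability measure, so in particular
\[
\sum_{(x,y) \in E(G)} \mu_E(x,y) \leq 1, \qquad \sum_{(y,x) \in E(G)} \mu_E(y,x) \leq 1,
\]
regardless of whether $x$ has finite degree. Second, the hypothesis $\mu_V(x) \geq \delta$ gives $1/\mu_V(x) \leq 1/\delta$. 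Third, since $U' \in C^0$ and $\|q_t\|_\infty < \infty$, the sequence $\{U'(\partial q_t(x,y))\}_{(x,y) \in E(G)}$ is uniformly bounded by some $M = M(\|q_t\|_\infty)$. Combining these yields $|X_2(q,\dot q)(x)| \leq 2M/\delta$ uniformly in $x$, so $X(q,\dot q) \in B$.

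\emph{Step 2 (local Lipschitz).} Fix $R > 0$ and restrict to the closed ball $B_R = \{(q,\dot q) \in B : \|q\|_\infty, \|\dot q\|_\infty \leq R\}$. Since $U' \in C^1(\rr)$, it is Lipschitz on the compact interval $[-2R, 2R]$ with some constant $L_R$, and $|\partial q(x,y)| \leq 2R$ on $B_R$. Hence for $(q,\dot q), (\tilde q, \dot{\tilde q}) \in B_R$,
\[
|U'(\partial q(x,y)) - U'(\partial \tilde q(x,y))| \leq L_R |\partial q(x,y) - \partial \tilde q(x,y)| \leq 2 L_R \|q - \tilde q\|_\infty.
\]
Inserting this into the definition of $X_2$ and reusing the estimates of Step~1 gives $\|X_2(q,\dot q) - X_2(\tilde q, \dot{\tilde q})\|_\infty \leq (4 L_R/\delta) \|q - \tilde q\|_\infty$. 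The first component contributes Lipschitz constant $1$ in $\dot q$, so $X$ is Lipschitz on $B_R$ with constant $\max(1, 4 L_R/\delta)$.

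\emph{Step 3 (conclusion).} With $X: B \to B$ continuous and locally Lipschitz, the Banach-space Picard--Lindelöf theorem produces an open interval $I \ni 0$ and a unique $C^1$ curve $c: I \to B$ solving $\dot c(t) = X(c(t))$, $c(0) = (q_0, \dot q_0)$. The hardest ingredient is not deep but rather conceptual: since vertices of $G$ may have infinite degree, the very definition of $X_2$ as an absolutely convergent series, and the uniform estimates above, rely crucially on the probability-measure assumption on $\mu_E$ along transversal layers, together with the uniform lower bound $\mu_V(x) \geq \delta$ that converts the divisions by $\mu_V(x)$ into a bounded factor $1/\delta$. Once these two structural inputs are identified, the rest is the standard contraction-mapping argument.
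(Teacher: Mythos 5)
Your proposal is correct and follows essentially the same route as the paper: both reduce the theorem to the Banach-space Picard--Lindel\"of theorem (the paper cites Abraham--Marsden--Ratiu) by verifying that $X$ is locally Lipschitz, using the local Lipschitz continuity of $U'$ (from $U\in C^2$), the bounds $\sum_{(x,y)\in E(G)}\mu_E(x,y)\le 1$ and $\sum_{(y,x)\in E(G)}\mu_E(y,x)\le 1$ from the probability-measure assumption on transversal edge layers, and $1/\mu_V(x)\le 1/\delta$, arriving at the same Lipschitz constant $\max(1,4L_R/\delta)$. Your version is slightly more careful in quantifying the estimate uniformly over a ball $B_R$ and in explicitly checking that $X$ maps $B$ to itself (which the paper relegates to a preceding remark), but these are presentational refinements of the identical argument.
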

\begin{proof}
The reader is referred to Lemma 4.1.6 (page 242), Lemma 4.1.9 (page 244) and Theorem 4.1.11 (page 246) in \cite{amr}. It suffices to show that the vector field $X$ is locally Lipschitz. Observe that $U'$ is locally Lipschitz due to the assumption $U \in C^2(\rr)$. Hence, choose $q_t$ and $\tilde{q}_t$ such that

\begin{equation}
|U'(\partial q_t(x,y)) - U'(\partial \tilde{q}_t(x,y))  | \leq K_1 |\partial q_t(x,y) - \partial \tilde{q}_t(x,y)|
\end{equation}
holds for all $(x,y) \in E(G)$. This implies, $|U'(\partial q_t(x,y)) - U'(\partial \tilde{q}_t(x,y))  | \leq 2 K_1 ||q_t-\tilde{q}_t||_{\ell^{\infty}(V(G))}$ for all edges $(x,y) \in E(G)$. Moreover, for every $x \in V(G)$, we have $ \frac{1}{\mu_V(x)} \leq \frac{1}{ \delta}$, $ \  \sum_{(x,y ) \in E(G)} \mu_E(x,y) \leq 1$ and $ \sum_{(y,x ) \in E(G)} \mu_E(y,x) \leq 1$ (sum over $y$). Hence, $\frac{4 K_1}{\delta}||q_t-\tilde{q}_t||_{\ell^{\infty}(V(G))}$ provides an upper bound for the second component of $X$. Let $K= max \{\frac{4 K_1}{\delta}, 1 \}$, then
\begin{equation}
||   X(q_t, \dot{q}_t) -  X(\tilde{q}_t, \dot{\tilde{q}}_t)  || \leq K \left( \ ||q_t-\tilde{q}_t||_{\ell^{\infty}(V(G))} + ||\dot{q}_t-\dot{\tilde{q}}_t||_{\ell^{\infty}(V(G))} \ \right)
\end{equation}
\end{proof}
\begin{example}
\label{exampleTrivialZgraph}
The simplest $\integers$-graded graph is $\integers$ itself considered as a path graph  $(V(\integers), E(\integers))$ with the vertex set $V(\integers) = \integers$  and an edge set $E(\integers) =\{(n-1,n) : \ n \in \integers \}$. The identity function plays the role of the rank function $ \pii $. Each transversal layer contains a single vertex, $  \pii ^{-1}(n) = \{n \}$ and as a probability measure on a transversal layer, we have $\mu_V(n)=1$ for each $n \in \integers$. Similarly, we have $\mu_E(n,n+1)=1$ for all edges $(n,n+1) \in E(\integers)$. Every function in $\ell^2(\integers)$ is radial, i.e., $\ell^2_{rad}(\integers) = \ell^2(\integers) $ and definition \ref{def:Classical Hamiltonian} provides the following Hamiltonian function

\begin{equation}
H_{\integers}(q_t,p_t) = \sum_{n \in \integers}  \frac{p_{\integers}^2(n,t) }{2}   +   \sum_{(n,n+1 ) \in E(\integers)}  U( q_{\integers}(n+1,t)-q_{\integers}(n,t)).
\end{equation}
The equations of motion that govern this one-dimensional system are given by
\begin{equation}
\label{eq: equationsOfMotiononZ}
\dot{p}_{\integers}(n,t)= U'(q_{\integers}(n+1,t)-q_{\integers}(n,t))-U'(q_{\integers}(n-1,t)-q_{\integers}(n,t)),  \quad \dot{q}_{\integers}(n,t) = p_{\integers}(n,t).
\end{equation}
For a given $(q_{\integers,0}, \dot{q}_{\integers,0}) \in  \ell^{\infty}(V(\integers)) \oplus  \ell^{\infty}(V(\integers))$, we apply Theorem \ref{existanceUniqueTheorem} and solve the equations of motion (\ref{eq: equationsOfMotiononZ}). For later use, we denote the solution by

\begin{equation}
\label{eq:ZsolutionToLift}
c_{\integers}: I \to \ell^{\infty}(V(\integers)) \oplus  \ell^{\infty}(V(\integers)), \ \  t \mapsto c_{\integers}(t):=\{ q_{\integers}(n,t), \dot{q}_{\integers}(n,t) \}_{n \in \integers},  \ \  c_{\integers}(0)=(q_{\integers,0}, \dot{q}_{\integers,0}).
\end{equation}
\end{example}

We lift the curve $c_{\integers}$ given in (\ref{eq:ZsolutionToLift})  to a $\integers$-graded graph $G$ in the following sense. 
\begin{definition}
\label{def:liftToRadialSol}
Let  $ \pp ^\ast$ be the adoint of the averaging operator $ \pp $. We lift the one-dimensional solution given in (\ref{eq:ZsolutionToLift}) to a radial curve $c_{rad}: I \to \ell^{\infty}(V(G)) \oplus  \ell^{\infty}(V(G))$ defined by
\begin{equation}
\label{eq:GsolutionLifted}
 t \mapsto c_{rad}(t):=\{  \pp ^\ast (q_{\integers}(.,t))(x),  \pp ^\ast (\dot{q}_{\integers}(.,t)(x) \}_{x \in V(G)}.
\end{equation}
Note that  $c_{rad}(0)=\{  \pp ^\ast (q_{\integers}(.,0))(x),  \pp ^\ast (\dot{q}_{\integers}(.,0)(x) \}_{x \in V(G)}=\left(  \pp ^\ast (q_{\integers,0}),  \pp ^\ast (\dot{q}_{\integers,0}) \right)$. It is easily seen that $c_{rad}(t) \in \ell_{rad}^{\infty}(V(G)) \oplus  \ell_{rad}^{\infty}(V(G))$ for all $t \in I$, as the range of $ \pp ^{\ast}$ is $\ell^2_{rad}(G)$ (see Lemma \ref{usefulProp1}).
\end{definition}
\begin{theorem}
\label{thm:liftToradailSolution}
Under the Assumption \ref{necessaryAssumptionforRadialSol} the curve $c_{rad}(t)$, defined in (\ref{eq:GsolutionLifted}) on an interval $I\ni0$, uniquely solves the initial value problem given in equation (\ref{dynamicalSystem}), i.e $\frac{d}{dt} c_{rad}(t) = X(c_{rad}(t))$ for all $ t \in I$, and  $\ c_{rad}(0)  = \left(  \pp ^\ast (q_{\integers,0}),  \pp ^\ast (\dot{q}_{\integers,0}) \right)$. In particular,  if $c(t)$ is a solution of the initial value problem and $c(0) \in   \ell_{rad}^{\infty}(V(G)) \oplus  \ell_{rad}^{\infty}(V(G))  $, then  $c(t) \in   \ell_{rad}^{\infty}(V(G)) \oplus  \ell_{rad}^{\infty}(V(G))$ for all  $t \in I$.
\end{theorem}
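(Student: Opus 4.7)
The plan is to check directly that the lifted curve $c_{rad}(t)$ satisfies the graph equations of motion (\ref{dynamicalSystem}), then read off uniqueness and the stability of the radial subspace from Theorem~\ref{existanceUniqueTheorem}. To begin, I would unpack the definition: for any $x \in V(G)$ with $\pii(x) = n$, Proposition~\ref{propForaveraging} gives $(\pp^{\ast} q_{\integers}(\cdot,t))(x) = q_{\integers}(n,t)$, so $c_{rad}(t)$ is constant on each transversal layer and its time derivative at $x$ is simply $(\dot q_{\integers}(n,t), \ddot q_{\integers}(n,t))$. The initial condition $c_{rad}(0) = (\pp^{\ast} q_{\integers,0}, \pp^{\ast} \dot q_{\integers,0})$ is immediate.

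Next I would evaluate $X(c_{rad}(t))$ using (\ref{vectorFieldMain}). Because of the layered structure, for every edge $(x,y) \in E(G)$ with $\pii(x) = n$ one has $\pii(y) = n+1$, hence $\partial q_t(x,y) = q_{\integers}(n+1,t) - q_{\integers}(n,t)$ is independent of which neighbor $y$ is chosen; analogously $\partial q_t(y,x) = q_{\integers}(n,t) - q_{\integers}(n-1,t)$ for incoming edges. This allows the $U'$ factors to be pulled out of the sums, and the Measure Balance Assumption~\ref{necessaryAssumptionforRadialSol} then yields
\begin{equation*}
\sum_{(x,y) \in E(G)} \frac{\mu_E(x,y)}{\mu_V(x)} = 1 = \sum_{(y,x) \in E(G)} \frac{\mu_E(y,x)}{\mu_V(x)}.
\end{equation*}
Consequently the second component of $X(c_{rad}(t))$ at $x$ collapses to $U'(q_{\integers}(n+1,t) - q_{\integers}(n,t)) - U'(q_{\integers}(n,t) - q_{\integers}(n-1,t))$, which by the one-dimensional equations of motion (\ref{eq: equationsOfMotiononZ}) equals $\dot p_{\integers}(n,t) = \ddot q_{\integers}(n,t)$. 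The first component of $X$ is $\dot q_{\integers}(n,t)$ by inspection. Thus $\frac{d}{dt} c_{rad}(t) = X(c_{rad}(t))$ on $I$.

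Uniqueness of this integral curve is immediate from Theorem~\ref{existanceUniqueTheorem}, since $X$ is locally Lipschitz. For the ``in particular'' clause, let $c(t) = (q_t, \dot q_t)$ be any solution with $c(0) \in \ell_{rad}^{\infty}(V(G)) \oplus \ell_{rad}^{\infty}(V(G))$. Define $\tilde q_{\integers,0}(n) := q_0(x)$ and $\tilde{\dot q}_{\integers,0}(n) := \dot q_0(x)$ for any $x \in \pii^{-1}(n)$; radiality makes this well-defined, and in fact $(\pp^{\ast} \tilde q_{\integers,0}, \pp^{\ast} \tilde{\dot q}_{\integers,0}) = c(0)$. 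Solving the one-dimensional IVP on $\integers$ with these initial data and lifting via (\ref{eq:GsolutionLifted}) produces, by the work above, a radial solution of the graph IVP agreeing with $c(t)$ at $t=0$. Uniqueness from Theorem~\ref{existanceUniqueTheorem} forces the two to coincide on $I$, so $c(t)$ remains radial. The main obstacle is the collapse calculation in the middle paragraph: one must see that Assumption~\ref{necessaryAssumptionforRadialSol} is exactly the normalization turning the weighted sums of $U'$ contributions over bundles of adjacent edges into the single scalar $U'$ terms of the one-dimensional model; once that cancellation is made explicit, the rest of the proof is bookkeeping.
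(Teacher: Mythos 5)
Your argument is correct and follows essentially the same route as the paper's proof: evaluate the lifted curve layer by layer, observe that $U'(\partial q_t(\cdot,\cdot))$ is constant over each bundle of edges joining consecutive transversal layers, use Assumption~\ref{necessaryAssumptionforRadialSol} to collapse the weighted sums to the single one-dimensional terms, and then obtain uniqueness and radial stability from Theorem~\ref{existanceUniqueTheorem} (your explicit reconstruction of the $1D$ initial data for the ``in particular'' clause is a welcome elaboration of what the paper leaves implicit). The only discrepancy is that your second term $U'(q_{\integers}(n,t)-q_{\integers}(n-1,t))$ --- which is the correct specialization of (\ref{eq:equationsOfMotion1}) to the path graph --- differs from the term $U'(q_{\integers}(n-1,t)-q_{\integers}(n,t))$ printed in (\ref{eq: equationsOfMotiononZ}); this is a sign slip in the paper's displayed equation rather than a flaw in your reasoning.
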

\begin{proof}
Let $x \in  \pii ^{-1}(n)$, then $ \partial  \pp ^\ast (q_{\integers}(.,t))(x,y) :=  \pp ^\ast (q_{\integers}(.,t))(y) -  \pp ^\ast (q_{\integers}(.,t))(x) = q_{\integers}(n+1,t) - q_{\integers}(n,t)$. In particular,
\begin{equation}
U'(  \partial  \pp ^\ast (q_{\integers}(.,t))(x,y)) = U'(q_{\integers}(n+1,t) - q_{\integers}(n,t) ).
\end{equation}
Assumptions \ref{necessaryAssumptionforRadialSol} gives $1 =  \frac{1}{\mu_V(x)}\sum_{ (x,y ) \in E(G)} \mu_E(x,y)$. It follows,
\begin{equation}
\sum_{ (x,y ) \in E(G)}  \frac{\mu_E(x,y)}{\mu_V(x)} U'(  \partial  \pp ^\ast (q_{\integers}(.,t))(x,y)) = U'(q_{\integers}(n+1,t) - q_{\integers}(n,t) ).
\end{equation}
Similarly, we verify
\begin{equation}
\sum_{ (y,x ) \in E(G)}  \frac{\mu_E(y,x)}{\mu_V(x)} U'(  \partial  \pp ^\ast (q_{\integers}(.,t))(y,x)) = U'(q_{\integers}(n-1,t) - q_{\integers}(n,t) ).
\end{equation}
Hence, this shows that the second component in the vector field (\ref{vectorFieldMain}) is equivalent to the right-hand side of the first equation in (\ref{eq: equationsOfMotiononZ}). To verify the left-hand side, note that $\dot{q}_{\integers}(n,t) = p_{\integers}(n,t)$.
\end{proof}

\section{Toda Lattices on $\integers$-Graded Graphs}\label{sec-Toda}
Our primary goal is to extend the definition of a one-dimensional Toda lattice to a $\integers$-graded graph $G$. To this end, we introduce the Flaschka's variables \cite{10.1143/PTP.51.703, PhysRevB.9.1924}. These variables represent a transformation of the phase space variables $(q_t,p_t)$ to coordinates, which give the dynamics in the Lax pair formalism.
\begin{definition}
\label{FlaschkaVariables}
For a given   $U$-potential the Flaschka transformation of $q_t,p_t \in  \ell^{\infty}(V(G)) $ is defined by
\begin{align}
\label{eq:flaschka1}
 a_t(x,y) := \frac{1}{2} U'(\partial q_t(x,y)/2)=\frac{1}{2} U' \left(    \frac{q(y,t)}{2}-\frac{q(x,t)}{2} \right),                                          \quad   \quad  (x,y) \in E(G), \\
\label{eq:flaschka2}
  b(x,t) := - \frac{p(x,t)}{2\mu_V(x)},  \quad \quad  x \in V(G).
\end{align}
Due to the continuity of $U'$ as well as boundedness of $q_t$ and $p_t$, we imply  $a_t = \{  a_t(x,y) \}_{(x,y) \in E(G)} \in \ell^{\infty}(E(G))$ and  $b_t = \{ b(x,t) \}_{x \in V(G)} \in \ell^{\infty}(V(G)) $. We call $a_t$ and $b_t$ Flaschka's variables.
\end{definition}
\begin{proposition}
	We assume that the restriction of $U'$ to its range, i.e., $U'|_{range(U')}: range(U') \to \rr$ is injective. In the Flaschka's variables, the equations of motion (\ref{eq:equationsOfMotion1}) and (\ref{eq:equationsOfMotion2})  take the form
	\begin{align}
	\label{eq:propForFlow2}
	 \dot{a}_t(x,y) = -\frac{1}{2} U''\circ (U')^{-1}\left( 2 a_t(x,y)\right)  \  \partial  b_t(x,y),
	\\
	\label{eq:propForFlow1}
	\dot{b}(x,t) =  \sum_{(y,x ) \in E(G)} \frac{\mu_E(y,x)}{2\mu_V(x)}  U'[2 (U')^{-1} (2a_t(y,x))]  - \sum_{ (x,y ) \in E(G)}  \frac{\mu_E(x,y)}{2\mu_V(x)} U'[2 (U')^{-1} (2a_t(x,y))].
	\end{align}
\end{proposition}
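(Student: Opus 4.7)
The plan is a direct chain-rule computation: differentiate each Flaschka variable in $t$, substitute the Hamiltonian equations of motion (\ref{eq:equationsOfMotion1})--(\ref{eq:equationsOfMotion2}), and finally invert the Flaschka transformation to rewrite everything back in terms of $a_t$ and $b_t$. The injectivity hypothesis on $U'|_{\text{range}(U')}$ is exactly what legitimizes this last inversion step.

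First I would handle $\dot{a}_t(x,y)$. Differentiating $a_t(x,y)=\tfrac{1}{2}U'(\partial q_t(x,y)/2)$ yields
\begin{equation*}
\dot{a}_t(x,y) \;=\; \tfrac{1}{4}\, U''(\partial q_t(x,y)/2)\, \partial \dot q_t(x,y).
\end{equation*}
From (\ref{eq:equationsOfMotion2}) and the defining relation $b(x,t)=-p(x,t)/(2\mu_V(x))$, one has $\dot q(x,t)=-2 b(x,t)$, so $\partial \dot q_t(x,y)=-2\,\partial b_t(x,y)$. To rewrite $U''(\partial q_t(x,y)/2)$, I invoke $2a_t(x,y)=U'(\partial q_t(x,y)/2)$ together with the injectivity assumption to conclude $\partial q_t(x,y)/2=(U')^{-1}(2a_t(x,y))$. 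Substituting back reproduces (\ref{eq:propForFlow2}).

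Next I would handle $\dot{b}(x,t)$. Differentiating $b(x,t)=-p(x,t)/(2\mu_V(x))$ and substituting (\ref{eq:equationsOfMotion1}) gives
\begin{equation*}
\dot b(x,t) \;=\; \sum_{(y,x)\in E(G)} \frac{\mu_E(y,x)}{2\mu_V(x)}\, U'(\partial q_t(y,x)) \;-\; \sum_{(x,y)\in E(G)} \frac{\mu_E(x,y)}{2\mu_V(x)}\, U'(\partial q_t(x,y)).
\end{equation*}
Now I again invert: from $2a_t(x,y)=U'(\partial q_t(x,y)/2)$ and injectivity, $\partial q_t(x,y)=2(U')^{-1}(2a_t(x,y))$, hence $U'(\partial q_t(x,y))=U'[\,2(U')^{-1}(2a_t(x,y))\,]$, and analogously for the edge $(y,x)$. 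This produces (\ref{eq:propForFlow1}).

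There is really no serious obstacle here; the argument is a bookkeeping exercise. The one place to be careful is the sign propagation through the chain rule in the computation of $\dot a_t$ (two factors of $\tfrac{1}{2}$ combining with a $-2$ to give the final $-\tfrac{1}{2}$) and the ordering of arguments in $\partial q_t(x,y)$ versus $\partial q_t(y,x)$ when reading off $\dot p$ from (\ref{eq:equationsOfMotion1}). The only genuine hypothesis being used, beyond $U\in C^2(\rr)$, is the injectivity of $U'$ on its range, which guarantees that $(U')^{-1}$ is unambiguously defined on $\{2a_t(x,y) : (x,y)\in E(G)\}\subset \text{range}(U')$; without this, the inversion step that expresses $\partial q_t$ in terms of $a_t$ would not be well defined.
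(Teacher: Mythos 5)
Your proposal is correct and follows essentially the same route as the paper's proof: differentiate the Flaschka variables, substitute the Hamiltonian equations of motion, and use the injectivity of $U'$ on its range to express $\partial q_t$ back in terms of $a_t$. The sign and factor bookkeeping ($\tfrac14\cdot(-2)=-\tfrac12$) and the edge-orientation handling both match the paper's computation.
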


\begin{proof}
Note that equation (\ref{eq:equationsOfMotion2}) implies  $\frac{\dot{q}(x,t)}{2} = \frac{p(x,t)}{2\mu_V(x)} = -  b(x,t)$. We observe that $a_t(x,y) \in range(U')$ and hence equation (\ref{eq:flaschka1}) is equivalent to $\frac{\partial q_t(x,y)}{2} =  (U')^{-1} \left( 2 a_t(x,y) \right) $.  Differentiating equation (\ref{eq:flaschka1}) with respect to $t$, we obtain
\begin{equation}
\dot{a}_t(x,y) = \frac{1}{2} U''(\partial q_t(x,y)/2)  \left(    \frac{\dot{q}(y,t)}{2}-\frac{\dot{q}(x,t)}{2} \right)=- \frac{1}{2} U''\circ (U')^{-1}( 2 a_t(x,y))  \left( b(y,t) - b(x,t) \right).
\end{equation}
Using $\partial b_t(x,y)= b(y,t) - b(x,t)$ implies equation (\ref{eq:propForFlow2}).  Similarly, we differentiate equation (\ref{eq:flaschka2}) with respect to $t$ and substitute equation (\ref{eq:equationsOfMotion1}). We obtain
\begin{equation}
\dot{b}(x,t) = - \frac{\dot{p}(x,t)}{2\mu_V(x)} =  \sum_{(y,x ) \in E(G)} \frac{\mu_E(y,x)}{2\mu_V(x)}  U'(\partial  q_t(y,x))  - \sum_{ (x,y ) \in E(G)} \frac{\mu_E(x,y)}{2\mu_V(x)}  U'(\partial q_t(x,y)).
\end{equation}
We observe $U'(\partial q_t(x,y)) =U'[2 (U')^{-1} (2a_t(x,y))]$ and hence imply equation (\ref{eq:propForFlow1}).
\end{proof}
In our investigations on Hamiltonians of particle systems with exponential interactions we follow  \cite[Chapter 12]{TeschlbookJacobi} 
and introduce the potential
\begin{equation}
\label{eq:TodaPotential}
U(r) := e^{-r} -1.
\end{equation}

Note that the potential defined in (\ref{eq:TodaPotential}) is a rescaled variant of the original potential introduced by M.~Toda. For details the reader is referred to  \cite[page 223]{TeschlbookJacobi}.
A Toda lattice on a $\integers$-graded graph $G$ is given by the Hamiltonian

\begin{equation}
\label{eq:TodaHamilGeneralZgraph}
H_G(q_t,p_t) = \sum_{x \in V(G)}  \frac{p^2(x,t) }{2\mu_V(x)}   +   \sum_{(x,y ) \in E(G)} \mu_E(x,y) \left(  e^{-(q(y,t)-q(x,t))} -1  \right).
\end{equation}
The Toda equations of motion represented in the phase space coordinates  are given by
\begin{eqnarray}
\label{eq:equationsOfMotiontoda1}
\quad  \dot{p}(x,t) = \sum_{(y,x ) \in E(G)} \mu_E(y,x) e^{-(q(x,t)-q(y,t))}  -  \sum_{(x,y ) \in E(G)} \mu_E(x,y) e^{-( q(y,t)-q(x,t))  }  ,  \\
\label{eq:equationsOfMotiontoda2}
\dot{q}(x,t) = \frac{p(x,t) }{\mu_V(x)}.
\end{eqnarray}
The same equation when expressed in the Flaschka coordinates are given by

\begin{eqnarray}
\label{eq: TransformedequationsOfMotion}
 \begin{cases}
 \     \dot{a}_t(x,y)  = a_t(x,y) (b(y,t)-  b(x,t)),    \\
  \     \dot{b}(x,t) =   \sum_{(x,y ) \in E(G)} \frac{\mu_E(x,y)}{\mu_V(x)} 2 a^2_t(x,y)  - \sum_{ (y ,x) \in E(G)}  \frac{\mu_E(y,x)}{\mu_V(x)} 2 a^2_t(y,x).  
  \end{cases}
\end{eqnarray} 

\begin{example}[Continuation of Example \ref{exampleTrivialZgraph}]
The Hamiltonian $H_{\integers}(q,p)$ in this case describes nothing else but the one-dimensional Toda lattice  \cite[Equation (12.3), page 223]{TeschlbookJacobi}. The equations of motion are transformed into the following form
\begin{eqnarray}
\label{eq: TransformedequationsOfMotion1D}
 \begin{cases}
 \     \dot{a}_{\integers,t}(n,n+1)  = a_{\integers,t}(n,n+1) (b_{\integers}(n+1,t)-  b_{\integers}(n,t)),    \\
  \     \dot{b}_{\integers}(n,t) =   2 a^{2}_{\integers,t}(n,n+1) - 2 a^{2}_{\integers,t}(n-1,n) .   
  \end{cases}
\end{eqnarray} 
Reformulating equations (\ref{eq: TransformedequationsOfMotion1D}) as a Lax Pair and employing arguments from the inverse spectral theory, one can write the general N-soliton solution \cite[Equation (12.16), page 225]{TeschlbookJacobi},

\begin{equation}
\label{NSolitonSol}
q_N(n,t) = q_0 - \ln{ \frac{\det(\mathbb{1}+C_N(n,t))}{\det(\mathbb{1}+C_N(n-1,t))}},
\end{equation}
where
\begin{equation}
C_N(n,t) = \left( \frac{\sqrt{\gamma_i \gamma_j}}{1-e^{-(\kappa_i + \kappa_j)}} e^{-(\kappa_i + \kappa_j)n -(\sigma_i \sinh{(\kappa_i)}+ \sigma_j \sinh{(\kappa_j)})t     } \right)_{1 \leq i,j \leq N}
\end{equation}
and $\kappa_j, \gamma_j > 0$, $\sigma_j \in \{\pm 1\}$.
\end{example}
\begin{corollary}
\label{globalSolutionsAndRadial}
Let $U$ be a Toda potential given in (\ref{eq:TodaPotential}). For such a potential, the solution given in Theorem \ref{thm:liftToradailSolution} is global, i.e., it exists for all $ \ t \in \rr$.  Moreover, we lift the N-soliton solution $\{q_N(n,t)\}_{n \in \integers}$ given in (\ref{NSolitonSol})  to a radial solution on a $\integers$-graded graph $G$ in the sense of definition \ref{def:liftToRadialSol}. We set  $q_{G,N}: \rr \to \ell^{\infty}(V(G)) \oplus  \ell^{\infty}(V(G))$, $ \  q_{G,N}(t):=\{  \pp ^\ast (q_{N}(\cdot,t))(x),  \pp ^\ast (\dot{q}_{N}(\cdot,t)(x) \}_{x \in V(G)}$. Then $q_{G,N}$ defines an N-soliton solution on the  $\integers$-graded graph $G$.
\end{corollary}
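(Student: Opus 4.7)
The plan is to combine known classical results for the one-dimensional Toda lattice with the lifting framework already established in Theorem~\ref{thm:liftToradailSolution}, using the fact that the Toda potential $U(r)=e^{-r}-1$ enjoys global existence in the $1$D setting.

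First I would establish global-in-time existence for the $1$D Toda lattice with initial data $(q_{\integers,0},\dot q_{\integers,0})\in\ell^\infty(V(\integers))\oplus\ell^\infty(V(\integers))$. The local solution from Theorem~\ref{existanceUniqueTheorem} (applied to $G=\integers$, where $\delta=1$) is extended using conservation of the Hamiltonian $H_{\integers}$ together with the Flaschka reformulation \eqref{eq: TransformedequationsOfMotion1D}: since $\|a_{\integers,t}\|_{\ell^\infty}$ and $\|b_{\integers,t}\|_{\ell^\infty}$ remain bounded on any finite time interval (the Flaschka flow is a bounded Lax flow on the space of bounded Jacobi matrices, preserving spectrum), no blow-up can occur in finite time, so the solution extends to all $t\in\rr$. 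This is precisely the classical result recorded in \cite[Chapter 12]{TeschlbookJacobi}, which I would cite rather than reprove.

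Second, I would apply Theorem~\ref{thm:liftToradailSolution} directly. Given the now global $1$D solution $c_\integers(t)=(q_\integers(\cdot,t),\dot q_\integers(\cdot,t))$ defined on $I=\rr$, Definition~\ref{def:liftToRadialSol} produces $c_{rad}(t)=(\pp^\ast q_\integers(\cdot,t),\pp^\ast\dot q_\integers(\cdot,t))$, which lies in $\ell^\infty_{rad}(V(G))\oplus\ell^\infty_{rad}(V(G))$ because $\pp^\ast$ maps $\ell^\infty$ isometrically into the radial subspace (it is just the pullback along $\pii$). Theorem~\ref{thm:liftToradailSolution}, whose hypothesis is the Measure Balance Assumption~\ref{necessaryAssumptionforRadialSol}, asserts that this $c_{rad}$ solves the Hamiltonian flow~\eqref{dynamicalSystem} on $G$; since $c_\integers$ is defined for all $t\in\rr$, so is $c_{rad}$, proving the first claim.

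Third, for the $N$-soliton statement, I would verify that the explicit formula~\eqref{NSolitonSol} provides a valid input to the lifting procedure. Because $\kappa_j>0$, the exponential factors in $C_N(n,t)$ decay in $n$ at each fixed $t$, so $\det(\mathbb{1}+C_N(n,t))$ is uniformly bounded away from zero and infinity on any compact time window, yielding $\{q_N(\cdot,t)\}\in\ell^\infty(V(\integers))$, and differentiating confirms $\{\dot q_N(\cdot,t)\}\in\ell^\infty(V(\integers))$; moreover $(q_N,\dot q_N)$ is a global solution of \eqref{eq: equationsOfMotiononZ} with the Toda potential by construction. Applying the lifting of the preceding step with $c_\integers$ replaced by the $N$-soliton curve gives $q_{G,N}(t)=(\pp^\ast q_N(\cdot,t),\pp^\ast\dot q_N(\cdot,t))$ as a global radial solution of~\eqref{eq:TodaHamilGeneralZgraph} on $G$. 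The main conceptual point — and the only nontrivial one — is that the soliton structure is preserved under lifting because $\pp^\ast$ simply replicates the value $q_N(n,t)$ on every vertex $x$ with $\pii(x)=n$, so the soliton profile propagates through $G$ along the radial (rank) direction while remaining constant within each transversal layer; no further computation is required beyond the consistency already supplied by Theorem~\ref{thm:liftToradailSolution}.
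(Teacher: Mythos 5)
Your proposal is correct and follows essentially the same route as the paper: the global existence of the one-dimensional Toda flow is taken from \cite[Chapter 12]{TeschlbookJacobi} (the paper cites Theorem 12.6 there), and the lifting of both the global solution and the $N$-soliton is obtained by applying the argument of Theorem~\ref{thm:liftToradailSolution}. The extra details you supply (boundedness of the Flaschka variables preventing blow-up, and the uniform bounds on $\det(\mathbb{1}+C_N(n,t))$ guaranteeing $q_N(\cdot,t)\in\ell^\infty$) are consistent with, and slightly more explicit than, the paper's two-line proof.
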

\begin{proof}
The first statement follows by \cite[Theorem 12.6, page 232]{TeschlbookJacobi}. The second statement is verified with a similar argument as in the proof of Theorem \ref{thm:liftToradailSolution}.
\end{proof}

\begin{example}[Figure~\ref{fig:simplestZgradedGraph}]
\label{ex:MainExample}
We consider the simplest nontrivial infinite $\integers$-graded graph $G$ displayed in Figure~\ref{fig:simplestZgradedGraph} on page~\pageref{fig:simplestZgradedGraph}. As indicated in the figure, we denote the vertices by $V(G) = \{ \dots,-2,-1,0w_1,0w_2,1,2,\dots \}$. The transversal layers are $ \pii ^{-1}(0) = \{0w_1,0w_2\} \text{  and  } \  \pii ^{-1}(n) = \{n\}  \  \ \forall n \in \integers \backslash \{0\}$. For the measure, we have $\mu_V(x)=1$ for $x \in  V(G) \backslash \{ 0w_1,0w_2 \} $  and we set $\mu_V(x)=\frac{1}{2}$ for $x \in  \{ 0w_1,0w_2 \} $. The corresponding averaging operator $ \pp $, its adjoint  $ \pp ^{\ast}$, and the projector $\proj  $ are computed and given in the appendix. Similarly, for the measure on the edges, we have $\mu_E(n,n+1)=1$ for all $n \in \integers \backslash \{0,-1\}$ and we set $\mu_E(-1,0w_1) = \mu_E(-1,0w_2) = \mu_E(0w_1,1) =\mu_E(0w_2,1)=\frac{1}{2} $. In particular, the assumptions \ref{necessaryAssumptionforRadialSol} are satisfied, i.e.
\begin{align}
 \mu_V(-1)=\mu_E(-1,0w_1) + \mu_E(-1,0w_2), \quad \mu_V(0w_i)=\mu_E(0w_i,1) ,  \nonumber \\
 \mu_V(1)=\mu_E(0w_1,1) + \mu_E(0w_2,1), \quad \mu_V(0w_i)=\mu_E(-1,0w_i) ,  \nonumber
\end{align}
 for $i \in \{1,2\}$. A direct computation of the equations of motion (\ref{eq: TransformedequationsOfMotion}) expressed in the Flaschka coordinates gives for $n \in \integers \backslash \{-1,0,1\}$
\begin{equation}
\label{eq: TransformedequationsOfMotiona1}
 \begin{cases}
 \     \dot{a}_t(n,n+1)  = a_t(n,n+1) (b(n+1,t)-  b(n,t)),    \\
  \     \dot{b}(n,t) =   2 a^2_t(n,n+1)  -  2 a^2_t(n-1,n).
  \end{cases}
\end{equation} 
For the vertices $-1$, $1$, we obtain
\begin{equation}
\label{eq: TransformedequationsOfMotiona2}
 \begin{cases}
 \     \dot{a}_t(-1,0w_i)  = a_t(-1,0w_i) (b(0w_i,t)-  b(-1,t)),    \\
  \     \dot{b}(-1,t) =     a^2_t(-1,0w_1) + a^2_t(-1,0w_2)  -  2 a^2_t(-2,-1),
  \end{cases}
  \
   \begin{cases}
 \     \dot{a}_t(1,2)  = a_t(1,2) (b(2,t)-  b(1,t)),    \\
  \     \dot{b}(1,t) =    2 a^2_t(1,2) - a^2_t(0w_1,1) - a^2_t(0w_2,1),
  \end{cases}
\end{equation} 
with $i \in \{1,2\}$. Similarly, for the vertices $0w_i$, we have
\begin{equation}
\label{eq: TransformedequationsOfMotiona3}
 \begin{cases}
 \     \dot{a}_t(0w_i,1)  = a_t(0w_i,1) (b(1,t)-  b(0w_i,t)),    \\
  \     \dot{b}(0w_i,t) =   2 a^2_t(0w_i,1)  -  2 a^2_t(-1,0w_i).  
  \end{cases}
\end{equation} 
with $i \in \{1,2\}$. 

This example continues in Example~\ref{ex:ContinuationOfMainExample}.
\end{example}

\section{Non-existence of   Lifted Lax Pairs}
\label{sec:LiftingLaxPair}
The Lax approach starts with the observation that we can find two  time-dependent operators acting on $\ell^2(\integers)$, i.e., a \textit{Lax pair} $\{\PlaxZ  (t), \LhamiltonZ(t)\}$, such that the following system of  differential equations
\begin{equation}
\label{eq:firstLaxEq}
\frac{d}{dt} \LhamiltonZ(t) = [\PlaxZ  (t), \LhamiltonZ(t)]:=\PlaxZ  (t) \LhamiltonZ(t)- \LhamiltonZ(t)\PlaxZ  (t), \quad t \in \rr. 
\end{equation}
is equivalent to (\ref{eq: TransformedequationsOfMotion1D}). The matrix $ \LhamiltonZ(t)$ is the same infinite Jacobi matrix as in \eqref{eq:infiniteJacobiMatrix-},
\begin{equation}
\label{eq:infiniteJacobiMatrix}
\LhamiltonZ(t):    \ell^{2}(\integers) \to \ell^{2}(\integers), \quad   \quad 
	  \varphi(n)  \mapsto a_{\integers,t}(n,n+1)  \varphi(n+1) + b_{\integers}(n,t )  \varphi(n) + a_{\integers,t}(n-1,n)  \varphi(n-1),
\end{equation}
and  $\PlaxZ  (t) := [\LhamiltonZ(t)]_{+} - [\LhamiltonZ(t)]_{-}$, where  $[\LhamiltonZ(t)]_{+}$  and $ [\LhamiltonZ(t)]_{-}$ define the upper and lower triangular parts of $\LhamiltonZ(t)$ respectively.  Equation (\ref{eq:firstLaxEq}) is called the \textit{Lax equation} corresponding to $\{\PlaxZ  (t), \LhamiltonZ(t)\}$ and we refer to $\PlaxZ  (t)$ and $\{\PlaxZ  (t), \LhamiltonZ(t)\}$ as a Lax operator and a Lax Pair, respectively. A crucial aspect of the Lax method is that the dynamics of the system (\ref{eq:firstLaxEq}) evolve $\LhamiltonZ(t)$ in such a way that its spectrum is invariant.  For this choice $\{\PlaxZ  (t), \LhamiltonZ(t)\}$, it can be shown that the Lax equation (\ref{eq:firstLaxEq}) indeed reproduces (\ref{eq: TransformedequationsOfMotion1D}). Note that $\PlaxZ  (t) $ is skew-adjoint for $ t \in \rr$. We lift $\LhamiltonZ(t)$ to an operator on the $\integers$-graded graph $G$ in the sense of Definition \ref{def:LiftOperator}. We impose Assumption  \ref{necessaryAssumptionforRadialSol} on the measures $\mu_V$ and $\mu_E$ to justify this lift (see Theorem \ref{thm:LiftingTheorem}). Let $\{\Lhamilton(t)\}_{t \in \rr}$ be a family of lifted Jacobi operators on $G$. 
\begin{definition}
\label{def:radLax}
The radial Lax operator on $G$ is defined as $\Plax_{rad}(t): \ell^2(G)  \to \ell^2(G), \ \Plax_{rad}(t) :=  \pp ^{\ast}    \PlaxZ  (t)  \pp $.
\end{definition}

\begin{remark}
Note that $\Plax_{rad}(t)$ in general \emph{does not  reflect  the adjacency relation of the graph} $G$ in the sense of 
Definition~\ref{def:LiftOperator}.  
\end{remark}

\begin{proposition}
\label{prop:RadialLax}
Let $\{\Lhamilton(t)\}_{t \in \rr}$ be a family of lifted Jacobi operators. Suppose that Assumption \ref{separationAnsatz}, the Spectral Separation Assumption, holds for each $\Lhamilton(t)$ and that  $\LhamiltonZ(t)$ satisfies the Lax equation (\ref{eq:firstLaxEq}). Then, the following equation holds for all $ t \in \rr$.
\begin{equation}
\label{radialLaxEquation}
 \proj \frac{d}{dt} \Lhamilton (t) \proj   = \frac{d}{dt} \Lhamilton (t) \proj   = [\Plax_{rad}(t) ,\  \Lhamilton(t) ] 
\end{equation}
and 
\begin{equation}
	\label{eq:LaxForH}
	\frac{d}{dt} \Lhamilton (t)  = \frac{d}{dt} \Lhamilton (t) \proj   + \frac{d}{dt} \Lhamilton (t)(\Id - \proj  ) = [\Plax_{rad}(t) ,\  \Lhamilton(t) ]+ \frac{d}{dt} \Lhamilton (t)(\Id - \proj  ).
\end{equation}
\end{proposition}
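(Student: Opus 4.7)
The plan is to derive (\ref{radialLaxEquation}) from the one-dimensional Lax equation (\ref{eq:firstLaxEq}) by sandwiching it with $\pp^{\ast}$ and $\pp$, and then to use the Spectral Separation Assumption to absorb the leftover projection factors. Equation (\ref{eq:LaxForH}) will follow from (\ref{radialLaxEquation}) by the trivial decomposition $\Id = \proj + (\Id-\proj)$.

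First, I would observe that Assumption~\ref{separationAnsatz} forces $\Lhamilton(t)$ to be block-diagonal with respect to the orthogonal decomposition $\ell^2(G) = \ell^2_{rad}(G) \oplus \mathrm{Ker}\,\pp$ given by Lemma~\ref{usefulProp1}, and is therefore equivalent to the commutation relation $\Lhamilton(t)\proj = \proj\Lhamilton(t)$ for every $t$. Since $\proj$ is time-independent, differentiating this identity yields the first equality in (\ref{radialLaxEquation}):
\begin{equation*}
\frac{d}{dt}\Lhamilton(t)\proj = \proj\,\frac{d}{dt}\Lhamilton(t)\,\proj.
\end{equation*}

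Next, I would differentiate the lift relation $\LhamiltonZ(t) = \pp\Lhamilton(t)\pp^{\ast}$ in $t$, substitute into the Lax equation (\ref{eq:firstLaxEq}), and sandwich with $\pp^{\ast}$ on the left and $\pp$ on the right. Using $\pp^{\ast}\pp = \proj$ from Lemma~\ref{usefulProp1} together with the definition $\Plax_{rad}(t) = \pp^{\ast}\PlaxZ(t)\pp$, this gives
\begin{equation*}
\proj\,\frac{d}{dt}\Lhamilton(t)\,\proj = \Plax_{rad}(t)\Lhamilton(t)\proj - \proj\Lhamilton(t)\Plax_{rad}(t).
\end{equation*}

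The final step is to eliminate the two remaining projections on the right-hand side. For the first term, $\Lhamilton(t)(\Id-\proj)$ maps into $\mathrm{Ker}\,\pp$ by Assumption~\ref{separationAnsatz}, and $\Plax_{rad}(t)$ annihilates $\mathrm{Ker}\,\pp$ since it factors through $\pp$; hence $\Plax_{rad}(t)\Lhamilton(t)\proj = \Plax_{rad}(t)\Lhamilton(t)$. For the second term, the range of $\Plax_{rad}(t)$ lies in the range of $\pp^{\ast}$, which is $\ell^2_{rad}(G)$ (Lemma~\ref{usefulProp1}), so $\proj\Plax_{rad}(t) = \Plax_{rad}(t)$, and combined with $\proj\Lhamilton(t) = \Lhamilton(t)\proj$ we obtain $\proj\Lhamilton(t)\Plax_{rad}(t) = \Lhamilton(t)\Plax_{rad}(t)$. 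This establishes (\ref{radialLaxEquation}), and (\ref{eq:LaxForH}) then follows from $\Id = \proj + (\Id-\proj)$ applied on the right of $\frac{d}{dt}\Lhamilton(t)$. The only subtlety is the bookkeeping with projections; no analytic difficulty arises since all operators are bounded and $\proj$ is time-independent.
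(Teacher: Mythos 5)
Your proof is correct and follows essentially the same route as the paper's: both start from the one-dimensional Lax equation, insert $\pp\pp^{\ast}=\Id_{\ell^2(\integers)}$ and $\pp^{\ast}\pp=\proj$ to pass to $\Lhamilton(t)$ and $\Plax_{rad}(t)$, and then use the Spectral Separation Assumption (equivalently $[\Lhamilton(t),\proj]=0$) to strip the leftover projections. The only cosmetic difference is that you conjugate by $\pp^{\ast}(\cdot)\pp$ from the outset, whereas the paper works with $\pp(\cdot)\pp^{\ast}$ and removes the remaining factor of $\pp$ at the end by an orthogonality argument in $\ell^2_{rad}(G)\oplus\mathrm{Ker}\,\pp$; the content is identical.
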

\begin{proof}
We observe
\begin{eqnarray}
\label{eq:calForTheoremRadialLax}
  \pp  \frac{d}{dt}\Lhamilton(t)  \pp ^{\ast} = \frac{d}{dt}\LhamiltonZ(t)  &=& \PlaxZ  (t) \LhamiltonZ(t) - \LhamiltonZ(t) \PlaxZ  (t), \nonumber \\
					  &=&  \pp  \left(  \pp ^{\ast} \PlaxZ  (t)  \pp  \right) \proj   \left(  \pp ^{\ast} \LhamiltonZ(t)  \pp  \right)  \pp ^{\ast} -  \pp  \left( \pp ^{\ast} \LhamiltonZ(t)  \pp  \right) \left(  \pp ^{\ast}\PlaxZ  (t)  \pp  \right) \proj    \pp ^{\ast},  \nonumber  \\
					  &=&  \pp  \left( \Plax_{rad}(t) \proj   \right)  \left( \Lhamilton(t)  \proj   \right)  \pp ^{\ast} -  \pp  \left( \Lhamilton(t)  \proj   \right) \left(  \Plax_{rad}(t) \proj   \right)  \pp ^{\ast},  \nonumber \\
					  &=&  \pp  [ \Plax_{rad}(t) \proj   , \ \Lhamilton(t)  \proj   ]  \pp ^{\ast}.
\end{eqnarray} 
The second equality holds as $\LhamiltonZ(t)$ satisfies the Lax equation (\ref{eq:firstLaxEq}). The third equality holds by Lemma \ref{usefulProp1}, ($range ( \pp ^{\ast})=\ell^2_{rad}(G)$). The fourth equality holds as $\Lhamilton(t)$  satisfies the assumption \ref{separationAnsatz} (Spectral Separation Assumption) and the following observation $ \pp ^{\ast} \LhamiltonZ(t)  \pp  =  \pp ^{\ast}  \pp  \Lhamilton(t)  \pp ^{\ast}  \pp  = \proj   \Lhamilton(t)  \proj   =  \Lhamilton(t)  \proj   $. We right-multiply the equation (\ref{eq:calForTheoremRadialLax}) with $ \pp $. Using $ \pp ^{\ast}  \pp  = \proj  $, we obtain
\begin{equation}
  \pp  \frac{d}{dt}\Lhamilton(t) \proj   =  \pp  [ \Plax_{rad}(t) \proj   , \ \Lhamilton(t)  \proj   ].
\end{equation}
Hence, for an arbitrary $\varphi \in \ell^2(G)$, we have $\frac{d}{dt}\Lhamilton(t) \proj   \ \varphi -  [ \Plax_{rad}(t) \proj   , \ \Lhamilton(t)  \proj   ] \ \varphi  \in Ker \pp $. On the other hand, we can verify that $\frac{d}{dt}\Lhamilton(t) \proj   \ \varphi -  [ \Plax_{rad}(t) \proj   , \ \Lhamilton(t)  \proj   ] \ \varphi  \in \ell^2_{rad}(G)$,
therefore $\frac{d}{dt}\Lhamilton(t) \proj   =  [ \Plax_{rad}(t) \proj   , \ \Lhamilton(t)  \proj   ]$.   This means that we have effectively constructed a Lax pair  restricted to the projected, or radial, subspace, but not on the entire space of functions. 

Recall $\ell^2(G)= \ell^2_{rad}(G) \oplus Ker  \pp $ and define $\proj_{\bot}: \ell^2(G) \to Ker \pp $. As $\Plax_{rad}(t) \proj_{\bot}=0$, it follows $ [ \Plax_{rad}(t) \proj_{\bot} , \ \Lhamilton(t) ]=0$. Moreover, the Assumption \ref{separationAnsatz} implies $[ \Plax_{rad}(t) \proj   , \ \Lhamilton(t)  \proj_{\bot}]=0$.
\end{proof}
As a consequence of the skew-adjointness of $\PlaxZ  (t)$, we have the following lemma. 
\begin{lemma}
\label{skewAdjointness}
 $ \Plax_{rad}(t)$ is a skew-adjoint  bounded operator for all $t \in \rr$.
\end{lemma}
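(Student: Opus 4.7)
The plan is to observe that the lemma is a direct consequence of the definition $\Plax_{rad}(t) = \pp^{\ast}\,\PlaxZ(t)\,\pp$ together with basic operator-theoretic facts already established in the excerpt, so the ``proof'' is essentially a short chain of identities rather than an argument requiring new machinery.

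First I would dispatch boundedness. Since $\LhamiltonZ(t)$ is an infinite Jacobi matrix built from bounded sequences $\{a_{\integers,t}(n,n+1)\}$ and $\{b_{\integers}(n,t)\}$ (this was imposed right before \eqref{eq:infiniteJacobiMatrix-}), it is a bounded operator on $\ell^2(\integers)$, and then $\PlaxZ(t) = [\LhamiltonZ(t)]_+ - [\LhamiltonZ(t)]_-$ is likewise bounded (the upper and lower triangular parts of a bounded tridiagonal operator are bounded). By Proposition \ref{propForaveraging}, $\|\pp\| = 1$, and since $\pp^\ast$ is the Hilbert-space adjoint, $\|\pp^\ast\| = \|\pp\| = 1$. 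Submultiplicativity of the operator norm then gives
\begin{equation*}
\|\Plax_{rad}(t)\| \leq \|\pp^\ast\|\cdot\|\PlaxZ(t)\|\cdot\|\pp\| = \|\PlaxZ(t)\| < \infty.
\end{equation*}

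Next I would verify skew-adjointness by applying the standard adjoint identity to the triple product. Using $(ABC)^\ast = C^\ast B^\ast A^\ast$, the involutive identity $(\pp^\ast)^\ast = \pp$, and the hypothesis $\PlaxZ(t)^\ast = -\PlaxZ(t)$, one computes
\begin{equation*}
\Plax_{rad}(t)^\ast \;=\; \bigl(\pp^\ast\,\PlaxZ(t)\,\pp\bigr)^\ast \;=\; \pp^\ast\,\PlaxZ(t)^\ast\,\pp \;=\; -\pp^\ast\,\PlaxZ(t)\,\pp \;=\; -\Plax_{rad}(t),
\end{equation*}
which gives the conclusion for every $t \in \rr$.

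There is no genuine obstacle here; the only thing worth noting explicitly (and the one place a reader might pause) is that the skew-adjointness of $\PlaxZ(t)$ is already available from the discussion immediately preceding the lemma, and that $\Plax_{rad}(t)$ inherits it because conjugation by $\pp$ from one side and $\pp^\ast$ from the other is a $\ast$-preserving operation (in the sense that it sends skew-adjoint to skew-adjoint). In particular, the proof does not require, and does not use, the Spectral Separation Assumption \ref{separationAnsatz} — skew-adjointness and boundedness of $\Plax_{rad}(t)$ are structural and hold for any skew-adjoint bounded $\PlaxZ(t)$ on $\ell^2(\integers)$.
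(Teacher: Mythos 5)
Your proposal is correct and follows essentially the same route as the paper: the paper's proof is exactly the inner-product unpacking of your identity $\bigl(\pp^{\ast}\PlaxZ(t)\pp\bigr)^{\ast}=\pp^{\ast}\PlaxZ(t)^{\ast}\pp=-\Plax_{rad}(t)$, and it likewise dispatches boundedness by citing the boundedness of $\PlaxZ(t)$ (your added detail via $\|\pp\|=\|\pp^{\ast}\|=1$ is a harmless elaboration). Your closing remark that the Spectral Separation Assumption is not needed here is also consistent with the paper, which does not invoke it in this lemma.
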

\begin{proof}
The boundedness follows by the boundedness of $ \PlaxZ  (t)$. Moreover, we have $$ \bra{ \Plax_{rad}(t) \psi}\ket{ \varphi}_{G} =   \bra{ \PlaxZ  (t)  \pp  \psi}\ket{ \pp   \varphi}  = - \bra{  \pp  \psi}\ket{ \PlaxZ  (t)   \pp   \varphi}= -  \bra{  \psi}\ket{ \Plax_{rad}(t) \varphi}_{G} ,    \  \forall \ \psi, \varphi \in \ell^2(G).$$
\end{proof}

In particular, we raise the question of whether we can find a skew-adjoint operator $\Plax_{\bot}(t)$ such that $\frac{d}{dt} \Lhamilton (t)(I- \proj  ) = [\Plax_{\bot}(t) ,\  \Lhamilton(t) ]$. In such a  case, equation (\ref{eq:LaxForH}) would give $\frac{d}{dt} \Lhamilton (t) = [\PlaxG (t) ,\  \Lhamilton(t) ]$, where we set $\PlaxG (t)=\Plax_{rad}(t) + \Plax_{\bot}(t)$ and consequently imply that the pair $\{\PlaxG (t), \Lhamilton(t)\}$ satisfies the isospectral property. The following example shows that this is not always possible with $\Lhamilton(t)$ and  $ \Plax_{\bot}(t)$  reflecting the adjacency relation of the graph  $G$.

\begin{example}[Continuation of Example \ref{ex:MainExample}, Figure~\ref{fig:simplestZgradedGraph}]
\label{ex:ContinuationOfMainExample}
Under   assumption  \ref{uniformEdgeMeasureAssumption}, using Corollary \ref{coroForComputLift}, we can compute the lifted Jacobi operator $\Lhamilton(t)$.
The operator $\Lhamilton(t)$ is almost identical with $\LhamiltonZ(t)$, except at the rows and columns corresponding to the vertices $\{-1,0w_1,0w_2,1\}$. The corresponding submatrix is indicated inside a rectangle in the middle of the matrix.

\begin{equation}
\label{exampleHamilton}
  \renewcommand{\arraystretch}{1.5}
  \Lhamilton(t)=
  \left(
  \begin{array}{c c c  c c c c  c c c }
     &  \ddots   & \vdots    &  \vdots & \vdots &  \vdots  &  \vdots  & \vdots & \iddots  \\
 \dots   & b_{\integers}(-2,t) & a_{\integers, t}(-2,-1) & 0 & 0 & 0 & 0  & 0  & \dots \\
   \cline{3-6} 
\dots    &  \multicolumn{1}{c|}{a_{\integers, t}(-2,-1)} & b_{\integers}(-1,t) & \frac{a_{\integers, t}(-1,0)}{2} & \frac{a_{\integers, t}(-1,0)}{2} & \multicolumn{1}{c|}{0} & 0 & 0 & \dots \\
 \dots    & \multicolumn{1}{c|}{0} & a_{\integers, t}(-1,0) &  b_{\integers}(0,t) & 0 &  \multicolumn{1}{c|}{a_{\integers, t}(0,1)}  & 0 & 0 & \dots \\
 \dots     & \multicolumn{1}{c|}{0} &  a_{\integers, t}(-1,0) & 0 &  b_{\integers}(0,t) &  \multicolumn{1}{c|}{a_{\integers, t}(0,1)}  & 0 & 0 & \dots \\
  \dots  & \multicolumn{1}{c|}{0}   & 0 & \frac{a_{\integers, t}(0,1)}{2} & \frac{a_{\integers, t}(0,1)}{2} & \multicolumn{1}{c|}{ b_{\integers}(1,t)} &  a_{\integers, t}(1,2) & 0 & \dots \\ 
   \cline{3-6} 
  \dots  & 0 & 0 & 0 & 0 & a_{\integers, t}(1,2) &  b_{\integers}(2,t) & a_{\integers, t}(2,3) &  \\
   \iddots  & \vdots & \vdots & \vdots &  \vdots &  & a_{\integers, t}(2,3) & \ddots\\
  \end{array} 
  \right) 
\end{equation}

 We compute $\frac{d}{dt} \Lhamilton (t) \proj   = [\Plax_{rad}(t) ,\  \Lhamilton(t) ]$:
\begin{equation*}
  \renewcommand{\arraystretch}{1.7}
  \frac{d}{dt}\Lhamilton \proj   =
  \left(
  \begin{array}{c c c  c c c c  c c c }
     &  \ddots     &  \vdots & \vdots &  \vdots  &  \vdots   & \iddots  \\
   \dots &   \dot{b}_{\integers}(-2,t) &\dot{a}_{\integers,t}(-2,-1) & 0 & 0 & 0 & 0  &  \dots \\
   \cline{3-6} 
\dots   &  \multicolumn{1}{c|}{\dot{a}_{\integers,t}(-2,-1)} & \dot{b}_{\integers}(-1,t) & \frac{\dot{a}_{\integers,t}(-1,0)}{2}&\frac{\dot{a}_{\integers,t}(-1,0)}{2} & \multicolumn{1}{c|}{0} & 0 &  \dots \\
 \dots     & \multicolumn{1}{c|}{0} & \dot{a}_{\integers,t}(-1,0) &  \frac{ \dot{b}_{\integers}(0,t)}{2} &  \frac{ \dot{b}_{\integers}(0,t)}{2} &  \multicolumn{1}{c|}{\dot{a}_{\integers,t}(0,1)}  & 0 &  \dots \\
 \dots     & \multicolumn{1}{c|}{0} &  \dot{a}_{\integers,t}(-1,0) &  \frac{ \dot{b}_{\integers}(0,t)}{2} &   \frac{ \dot{b}_{\integers}(0,t)}{2} &  \multicolumn{1}{c|}{\dot{a}_{\integers,t}(0,1)}  & 0 &  \dots \\
  \dots  & \multicolumn{1}{c|}{0}   & 0 & \frac{\dot{a}_{\integers,t}(0,1)}{2} & \frac{\dot{a}_{\integers,t}(0,1)}{2}& \multicolumn{1}{c|}{ \dot{b}_{\integers}(1,t)} &  \dot{a}_{\integers,t}(1,2) &  \dots \\ 
   \cline{3-6} 
  \dots  & 0 & 0 & 0 & 0 & \dot{a}_{\integers,t}(1,2) & \dot{b}_{\integers}(2,t) &  \dots  \\
  & \iddots   & \vdots & \vdots &  \vdots & \vdots   & \ddots  &  \\
  \end{array}
  \right)
\end{equation*}
The radial Lax operator on $G$ is directly computed via $ \Plax_{rad}(t) =  \pp ^{\ast}    \PlaxZ (t)  \pp $, see Appendix \ref{app:liftingLax}. We obtain the following matrix elements for the commutator $[\Plax_{rad}, \Lhamilton] $.   
Note that we restrict ourselves to the submatrix  corresponding to the vertices $\{-1,0w_1,0w_2,1\}$.

\begin{equation*}
  \renewcommand{\arraystretch}{1.7}
  \begin{array}{c c c c c}
   \cline{1-4} 
\multicolumn{1}{|c}{2a_{\integers,t}(-1,0)^2-2a_{\integers,t}(-2,-1)^2}  & \frac{a_{\integers,t}(-1,0)}{2}(b_{\integers}(0,t)-b_{\integers}(-1,t)) & \frac{a_{\integers,t}(-1,0)}{2}(b_{\integers}(0,t)-b_{\integers}(-1,t))  & \multicolumn{1}{c|}{0 \  \  \text{    }}  \\
  \multicolumn{1}{|c}{a_{\integers,t}(-1,0)(b_{\integers}(0,t)-b_{\integers}(-1,t))}  &   a_{\integers,t}(0,1)^2-a_{\integers,t}(-1,0)^2 &  a_{\integers,t}(0,1)^2-a_{\integers,t}(-1,0)^2 &  \multicolumn{1}{c|}{\ast  \  \  \text{    }}   \\
  \multicolumn{1}{|c}{a_{\integers,t}(-1,0)(b_{\integers}(0,t)-b_{\integers}(-1,t))} & a_{\integers,t}(0,1)^2-a_{\integers,t}(-1,0)^2 &   a_{\integers,t}(0,1)^2-a_{\integers,t}(-1,0)^2 &  \multicolumn{1}{c|}{\ast  \  \  \text{    } }   \\
   \multicolumn{1}{|c}{0}    & \frac{a_{\integers,t}(0,1)}{2}(b_{\integers}(1,t)-b_{\integers}(0,t)) & \frac{a_{\integers,t}(0,1)}{2}(b_{\integers,t}(1,t)-b_{\integers,t}(0,t)) & \multicolumn{1}{c|}{ \ast \  \  \text{    }    } \\ 
   \cline{1-4} 
  \\
  \end{array}
\end{equation*}
Comparing matrix elements from $\frac{d}{dt}\Lhamilton \proj  $ with  $[\Plax_{rad}, \Lhamilton] $, we obtain a system of equations that indeed reproduce the equations of motion (\ref{eq: TransformedequationsOfMotiona1}), (\ref{eq: TransformedequationsOfMotiona2}) and (\ref{eq: TransformedequationsOfMotiona3})  with radial initial data, i.e.,
\begin{equation*}
 a_t(-1,0w_1)|_{t=0} = a_t(-1,0w_2)|_{t=0}, \quad a_t(0w_1,1)|_{t=0} = a_t(0w_2,1)|_{t=0}, \quad b(0w_1,t)|_{t=0} = b(0w_2,t)|_{t=0}.
\end{equation*}
This is a consequence of Theorem \ref{thm:liftToradailSolution} and can be directly verified when simplifying the notation in equations  (\ref{eq: TransformedequationsOfMotiona1}), (\ref{eq: TransformedequationsOfMotiona2}) and (\ref{eq: TransformedequationsOfMotiona3}). To this end, we set
\begin{itemize}
	\item $a_{\integers,t}(-1,0) := a_t(-1,0w_1) = a_t(-1,0w_2)$,
	\item $ a_{\integers,t}(0,1) := a_t(0w_1,1) = a_t(0w_2,1)$,
	\item $b_{\integers}(0,t) := b(0w_1,t) = b(0w_2,t)$.
	\item $a_{\integers,t}(n,n+1) := a_t(n,n+1)$ for $n \in \integers \backslash \{-1,0\}$
	\item $b_{\integers}(n,t) := b(n,t)$ for $n \in \integers \backslash \{0\}$
\end{itemize}

We turn   to the question: \textit{Can we find a skew-adjoint non-zero operator $ \Plax_{\bot}(t)$ which   reflects the adjacency relation of the graph  such that $\frac{d}{dt} \Lhamilton (t)(\Id- \proj  ) = [\Plax_{\bot}(t) ,\  \Lhamilton(t) ]$?}  If the answer is in the affirmative, then the isospectral property will imply that the $\Lhamilton(t)$-spectrum is $t$-independent, i.e., $\sigma(\Lhamilton(t))=\sigma(\Lhamilton(0))$. But this is  true only in the trivial case given in the following theorem. 
\begin{theorem}
\label{noLaxTheorem}Assume that, in  Example~\ref{ex:ContinuationOfMainExample},   $\Lhamilton(t)$ is the operator given in \eqref{eq:equForliftingThm} or, equivalently,  (\ref{exampleHamilton}). This operator acts on $G$ and lifts   a Jacobi matrix $\LhamiltonZ(t)$, given in \eqref{eq:infiniteJacobiMatrix}, in the sense of Definition \ref{def:LiftOperator}.

Then any isospectral dynamic of  $\Lhamilton(t)$ will  impose the constraint $\frac{d}{dt}b(0,t) =0$, which implies 
$$ [\PlaxG(t),\Lhamilton(t)]  =0 $$ 
and the  trivial dependence on time $t$: $\PlaxG(t)=\PlaxG(0)$ and $\Lhamilton(t) =\Lhamilton(0)$.
\end{theorem}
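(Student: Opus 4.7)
\textbf{Proof plan for Theorem~\ref{noLaxTheorem}.}

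The plan is to exploit the structural degeneracy at the level $\pii^{-1}(0)=\{0w_1,0w_2\}$ to produce a distinguished eigenvector of $\Lhamilton(t)$ with eigenvalue $b(0,t)$, and then to use isospectrality to force the time-derivative of $\Lhamilton(t)$ to vanish. First I would introduce the vector $e\in\ell^2(G)$ supported on the two copies of level $0$ by $e(0w_1)=1$, $e(0w_2)=-1$, and zero elsewhere. A direct inspection of the matrix (\ref{exampleHamilton}) shows that $\Lhamilton(t)e=b_\integers(0,t)\,e$: the $0w_1$- and $0w_2$-rows have identical off-diagonal entries, so the contributions at the vertices $-1$ and $1$ cancel, and the diagonal entry $b_\integers(0,t)=b(0,t)$ produces the eigenvalue. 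Since $\mu_V(0w_1)=\mu_V(0w_2)=\tfrac12$, we have $\pii e=0$, so $e\in \mathrm{Ker}\,\pii$ (in fact, $\mathrm{Ker}\,\pii$ is one-dimensional and spanned by $e$).

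Next I would derive $\dot b(0,t)=0$ from the assumed Lax equation $\frac{d}{dt}\Lhamilton(t)=[\PlaxG(t),\Lhamilton(t)]$. Differentiating the eigenvalue identity yields $\frac{d}{dt}\Lhamilton(t)\,e=\dot b(0,t)\,e$. Pair both sides with $e$ in the weighted inner product on $\ell^2(G)$. On one side one gets $\dot b(0,t)\,\|e\|_G^{2}$, a nonzero multiple of $\dot b(0,t)$. On the other side, using self-adjointness of $\Lhamilton$, $\Lhamilton e=b(0,t)e$, and skew-adjointness of $\PlaxG$,
\begin{equation*}
\langle [\PlaxG,\Lhamilton]\,e,e\rangle_G=b(0,t)\langle \PlaxG e,e\rangle_G-\langle \PlaxG e,\Lhamilton e\rangle_G=0,
\end{equation*}
since $\langle \PlaxG e,e\rangle_G=0$ by skew-adjointness. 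Therefore $\dot b(0,t)=0$.

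The next step is to feed this single constraint back into the Flaschka equations (\ref{eq: TransformedequationsOfMotion1D}) on $\integers$ at the central site: $\dot b_\integers(0,t)=2(a_\integers(0,1)^2-a_\integers(-1,0)^2)=0$, which forces $a_\integers(-1,0,t)=a_\integers(0,1,t)$ for all $t$. I would then iterate this in two directions. Differentiating the identity $a_\integers(-1,0,t)=a_\integers(0,1,t)$ and reapplying (\ref{eq: TransformedequationsOfMotion1D}) produces $b_\integers(-1,t)+b_\integers(1,t)=2b_\integers(0,t)$; differentiating again and using $\dot b_\integers(0,t)=0$ gives $a_\integers(-2,-1,t)=a_\integers(1,2,t)$; and so on. Inductively one obtains the reflection symmetries $a_\integers(n,n+1,t)=a_\integers(-n-1,-n,t)$ and $b_\integers(n,t)+b_\integers(-n,t)=2b_\integers(0,0)$ holding for every $n\in\integers$ and every $t$.

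The hard part, and the step that ultimately pins down the trivial time dependence, is to combine these reflection identities with the graph-adjacency constraint on $\PlaxG$. Writing out $[\PlaxG,\Lhamilton]$ at entries corresponding to vertex pairs that are \emph{not} edges of $G$ but admit a length-two path --- namely $(0w_1,0w_2)$ and $(-1,1)$ --- forces the parameters of $\PlaxG$ at the edges touching the degenerate level to be completely determined by the standard $1D$ Toda prescription $\PlaxG=\Plax_{rad}$, and simultaneously to annihilate $e$. Matching these with the entries of $\frac{d}{dt}\Lhamilton$ at all remaining positions (diagonal and edge entries at sites $\pm1$ and $0w_i$), one finds that consistency across the full submatrix, together with the propagated symmetries from the previous step, can hold for all $t$ only when $\dot a_\integers(n,n+1,t)=0$ and $\dot b_\integers(n,t)=0$ for every $n$. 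This gives $[\PlaxG,\Lhamilton]=\frac{d}{dt}\Lhamilton=0$, and therefore $\Lhamilton(t)=\Lhamilton(0)$ and $\PlaxG(t)=\PlaxG(0)$. The delicate point is precisely the reconciliation of the reflection constraints with the Lax equation at the degenerate layer; everything else is algebraic bookkeeping against the explicit form (\ref{exampleHamilton}).
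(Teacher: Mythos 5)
The first half of your argument is correct and is essentially the paper's: you both identify $\mathrm{Ker}\,\pp=\mathrm{span}\{e_{0w_1}-e_{0w_2}\}$ as a one-dimensional invariant subspace on which $\Lhamilton(t)$ acts as multiplication by $b(0,t)$, and conclude $\frac{d}{dt}b(0,t)=0$. (The paper reads this off from $\sigma\bigl(\Lhamilton(t)|_{\mathrm{Ker}\,\pp}\bigr)=\{b(0,t)\}$ together with isospectrality; your derivation by pairing the Lax equation against the eigenvector is a clean alternative, though it presupposes a commutator dynamic with skew-adjoint $\PlaxG$ rather than ``any isospectral dynamic'' as in the statement. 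Also, you write $\pii e=0$ where you mean the averaging operator $\pp$.)

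The gap is in the passage from $\frac{d}{dt}b(0,t)=0$ to triviality, which is where the content of the theorem lies. The reflection symmetries you derive, $a_{\integers,t}(n,n+1)=a_{\integers,t}(-n-1,-n)$ and $b_{\integers}(n,t)+b_{\integers}(-n,t)=2b_{\integers}(0,0)$, are correct consequences of $\frac{d}{dt}b_{\integers}(0,t)=0$, but they cannot by themselves imply that the dynamics is frozen: the reflection-symmetric sector is invariant under the one-dimensional Toda flow and contains non-constant solutions (e.g.\ a symmetric two-soliton configuration), all of which satisfy every identity you list. Hence your entire conclusion rests on the unexecuted ``hard part'' --- showing that a skew-adjoint $\Plax_{\bot}(t)$ supported on the edges of $G$ with $\frac{d}{dt}\Lhamilton(t)(\Id-\proj)=[\Plax_{\bot}(t),\Lhamilton(t)]$ forces $\frac{d}{dt}\Lhamilton(t)=0$ --- and you only assert its outcome without carrying out the computation at the $(0w_1,0w_2)$ and $(-1,1)$ entries. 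The paper closes this step by a different (and itself terse) route: from $a_{\integers,t}(-1,0)=a_{\integers,t}(0,1)$ it argues inductively, via the equations of motion and the transfer-matrix formalism, that the off-diagonal entries coincide for all $t$ and $n$, and it records that $\frac{d}{dt}b(0,t)=0$ already gives $\frac{d}{dt}\Lhamilton(t)(\Id-\proj)=0$, so the surviving Lax equation is the purely radial one $\frac{d}{dt}\Lhamilton(t)=[\Plax_{rad}(t),\Lhamilton(t)]$ of Proposition~\ref{prop:RadialLax}. To complete your version you must actually establish the non-existence of a non-trivial edge-supported $\Plax_{\bot}$, not merely the symmetry constraints.
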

\begin{proof}
We first investigate the restriction of $\Lhamilton(t)$ to  the subspace $Ker  \pp $ because, due to Corollary \ref{coroForComputLift}, $\Lhamilton(t)$ satisfies the Spectral Separation Assumption, i.e., Assumption \ref{separationAnsatz}. For a vertex $x \in V(G)$, the vector $e_{x}$ corresponds to
\begin{equation*}
e_{x} =
  \begin{cases}
    1       & \quad \text{on vertex } x\\
    0  & \quad \text{on } V(G) \backslash \{x\} .
  \end{cases}
\end{equation*}
Now, it can be shown that $Ker  \pp  = span\{ \  e_{0w_1} - e_{0w_2}  \  \} $. We compute 
\\
\text{ }
\begin{equation*}
 \renewcommand{\arraystretch}{1.5}
  \Lhamilton(t) (\Id _G - \proj  ) =
  \left(
  \begin{array}{c c c  c c c c  c c c }
     \ddots      &  \vdots   &  \vdots & \vdots &  \vdots  &  \vdots  & \vdots & \iddots  \\
     \dots        & 0          & 0         & 0        & 0           & 0          & 0        &  \dots \\
     \cline{3-6} 
     \dots       &  \multicolumn{1}{c|}{0} & 0 & 0 & 0 & \multicolumn{1}{c|}{0} & 0 &  \dots \\
 \dots    & \multicolumn{1}{c|}{0} & 0 &  \frac{b(0,t)}{2} &- \frac{b(0,t)}{2} &  \multicolumn{1}{c|}{0}  & 0 & \dots \\
 \dots    & \multicolumn{1}{c|}{0} &  0 & - \frac{b(0,t)}{2} &  \frac{b(0,t)}{2} &  \multicolumn{1}{c|}{0}  &  0 & \dots \\
  \dots & \multicolumn{1}{c|}{0}   & 0 & 0 & 0 & \multicolumn{1}{c|}{ 0}  & 0 & \dots \\ 
   \cline{3-6} 
  \dots &0 & 0 & 0 & 0 & 0 & 0 &    \dots \\
  \iddots  & \vdots  & \vdots & \vdots & \vdots &  \vdots  &  \vdots  & \ddots\\
  \end{array}
  \right)
\end{equation*}
\text{ }
\\
\\
and obtain for the restriction $\Lhamilton_{\bot}(t) = \Lhamilton(t)|_{Ker  \pp  }$,   $\Lhamilton_{\bot}(t): Ker  \pp  \to Ker  \pp , \ v \longmapsto b(0,t) v$. Then we have  spectrum of $\Lhamilton_{\bot}(t)$
\begin{equation}
\sigma(\Lhamilton_{\bot}(t) ) =\{b(0,t) \}.
\end{equation}
Hence, achieving isospectrality would impose the constraint $\frac{d}{dt}b(0,t) =0$. 
This constraint means
that the momenta of the particles at the vertices  $0w_1$ and $0w_2$ are time-independent. Moreover, the interactions between the particles at $0w_1$ and $0w_2$ and the particles at the vertices $-1$ and $1$ are equal at all times,
\begin{equation*}
a_{\integers,t}(-1,0)=a_{\integers,t}(0,1).
\end{equation*}
This implies, by induction in $n$, that $a_{\integers,t}(n,n+1)$ are equal for all $t$ and $n$ using \eqref{eq: TransformedequationsOfMotion1D} or the standard transfer matrix formalism \cite{AkkermansDunneLevy,TeschlbookJacobi,teschlmathematical}. 
Note   that the constraint $\frac{d}{dt}b(0,t) =0$ implies $\frac{d}{dt} \Lhamilton(t)(\Id  - \proj  )=0$ and hence $\frac{d}{dt} \Lhamilton (t)  = [\Plax_{rad}(t) ,\  \Lhamilton(t) ]$, which is related to Example~\ref{ex:lastExample}. \end{proof}\end{example}

\begin{remark}\label{rem:final}
We prove Theorem~\ref{noLaxTheorem} only in the situation of  Example~\ref{ex:ContinuationOfMainExample} 
in order to simplify notation 
although the same result can be expected for all examples considered above in Figures~\ref{fig:Hambly-Kumagai}, \ref{fig:simple} and \ref{fig:simplestZgradedGraph}.  
\end{remark}

\begin{example} 
	\label{ex:lastExample}Proposition~\ref{prop:RadialLax} uses the simple fact that $\pp$ is an isomorphism between $  \ell^2_{rad}(G)$ and   $\ell^2 ( \integers)$.
Therefore, if $\LhamiltonZ(t)$ is an isospectral family on  $\ell^2(\integers)$, then  we always can trivially generate   an isospectral family on $\ell^2(G)$ by \begin{equation}\label{e-unison}
\Lhamilton(t)=\Lhamilton_{rad}(t)=\pp ^{\ast}    \LhamiltonZ (t)  \pp  
\end{equation}with\begin{equation}
\Plax_{\bot}(t)=\Lhamilton_{\bot}(t)=0 ,
\qquad 
\PlaxG (t)=\Plax_{rad}(t)=\pp ^{\ast}    \PlaxZ (t)  \pp.
\end{equation}

In interesting $\integers$-graded examples, such as Figures~\ref{fig:Hambly-Kumagai}, \ref{fig:simple} and \ref{fig:simplestZgradedGraph}, $\Lhamilton_{rad}(t)=\pp ^{\ast}    \LhamiltonZ (t)  \pp$ 
does not   reflect  the adjacency relation of the graph. However the same  $\Lhamilton_{rad}(t)$ in Example~\ref{ex:ContinuationOfMainExample} reflects the   adjacency relation of the different graph, see  
  Figure~\ref{fig:rem}.
  
 \begin{figure}[htb]
 	\centering
\includegraphics{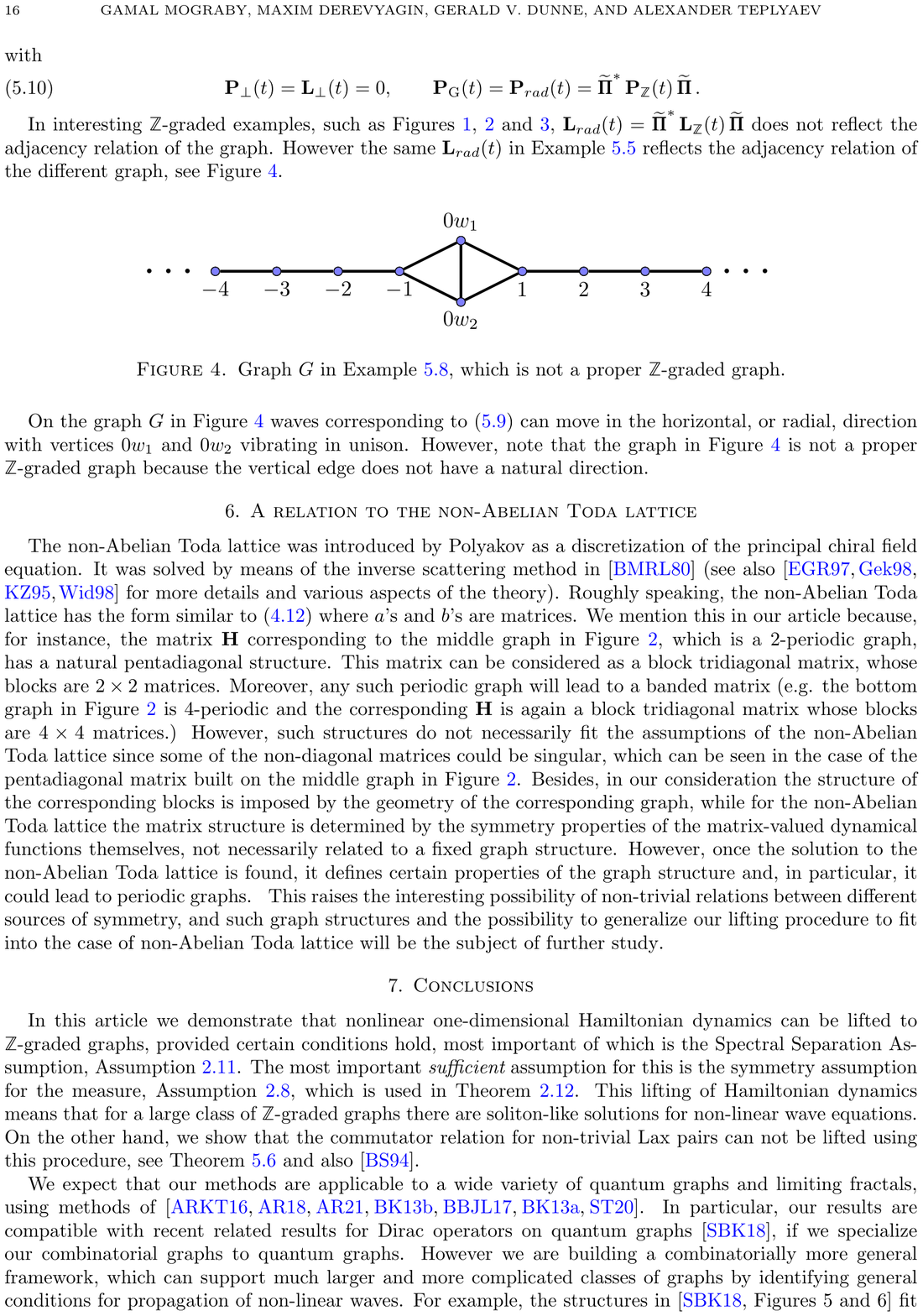}
\caption{Graph $G$ in Example~\ref{ex:lastExample}, which is not a proper $\integers$-graded graph. }
\label{fig:rem}
\end{figure} 
On the graph $G$ in  Figure~\ref{fig:rem} waves corresponding to \eqref{e-unison}  can move in the horizontal, or radial, direction with vertices $0w_1$ and $0w_2$ vibrating in unison. However, note that the graph in Figure~\ref{fig:rem} is not a proper $\integers$-graded graph because the vertical edge does not have a natural direction.
\end{example} 

\section{A relation to the non-Abelian Toda lattice} \label{non-Abelian}
The non-Abelian Toda lattice was introduced by Polyakov 
as a discretization of the principal chiral field equation. It was solved by means of the inverse scattering method in \cite{BMRL} (see also \cite{EGR97,Gekhtman98,KricheverZabrodin95, Widom98} for more details and various aspects of the theory). Roughly speaking, the non-Abelian Toda lattice has the form similar to \eqref{eq: TransformedequationsOfMotion1D}  where $a$'s and $b$'s are matrices. {We mention this in our article because, for instance,  the matrix $\hamilton$ corresponding to the middle graph in Figure \ref{fig:simple}, which is a 2-periodic graph, has a natural pentadiagonal structure. 
This matrix can be considered as a block tridiagonal matrix, whose blocks are $2\times 2$ matrices. Moreover, any such periodic graph will lead to a banded matrix (e.g. the bottom graph in Figure \ref{fig:simple} is 4-periodic and the corresponding $\hamilton$ is again a block tridiagonal matrix whose blocks are $4\times 4$ matrices.)
However, such structures do not necessarily fit the assumptions of the non-Abelian Toda lattice since some of the non-diagonal matrices could be singular, which can be seen in the case of the pentadiagonal matrix built on the middle graph in Figure \ref{fig:simple}.  {Besides, in our consideration the structure of the corresponding blocks is imposed by the geometry of the corresponding graph, while for the non-Abelian Toda lattice the matrix structure is determined by the symmetry properties of the matrix-valued dynamical functions themselves,
not necessarily related to a fixed graph structure.}   {However, once the solution to the non-Abelian Toda lattice is found, it defines certain properties of the graph structure and, in particular, it could lead to periodic graphs.} } {This raises the interesting possibility of non-trivial relations between different sources of symmetry, and such graph structures and the possibility to generalize our lifting procedure to fit into the case of non-Abelian Toda lattice will be the subject of further study}. 

\section{Conclusions}
{In this article we demonstrate that nonlinear one-dimensional Hamiltonian 
dynamics can be lifted to $\integers$-graded graphs, provided certain  conditions hold, most important of which is the Spectral Separation Assumption,  
Assumption \ref{separationAnsatz}. The most important {\em sufficient} assumption for this is {the symmetry assumption for the measure,} Assumption~\ref{necessaryAssumptionforRadialSol}, which is used in Theorem~\ref{thm:LiftingTheorem}. This lifting of Hamiltonian 
dynamics means that for a large class of  $\integers$-graded graphs there are soliton-like solutions for non-linear wave equations. On the other hand, we show that the commutator relation for non-trivial Lax pairs can not be lifted using this procedure, see Theorem~\ref{noLaxTheorem} and also  \cite{BerezanskyShmois}. 

We expect that our methods are applicable to a wide variety of quantum graphs and limiting fractals, using methods of 
\cite{hanoi,AR18,AR21,MR3154573,Berkolaiko,BKbook,ST19}.  
{In particular, our results are compatible with
recent related results for Dirac operators on quantum graphs \cite{SabirovBMK}, 
if we specialize our combinatorial graphs to quantum graphs. However we are building a combinatorially more general framework, which can support much larger and more complicated classes of graphs by identifying general conditions for propagation of non-linear waves. For example, the structures in \cite[Figures 5 and 6]{SabirovBMK}
fit within our framework, and \cite[Figure 2]{SabirovBMK} shows the behavior that we expect to observe in our more general framework. 
It can be said that the quantum graphs in \cite{SabirovBMK,AR18,AR21} are $\mathbb R$-graded, while ours are $\integers$-graded, reflecting the usual difference between discrete and continuous settings.}

Thus, we can highlight two aspects of our work: non-linear waves can be lifted to $\integers$-graded graphs, and possibly $\mathbb R$-graded graphs and fractals, but general integrability can not be lifted in this framework. We hope that our result will show how one can manipulate solitons on complicated networks, following \cite{Christodoulides}, and study more general time dependent situations, see \cite{transparent,KayMoses}.

}

\subsection*{Data Availability Statement} Data sharing not applicable -- no new data generated.
\subsection*{Acknowledgments}

{This research was supported in part by 
	the University of Connecticut Research Excellence Program, 
	by DOE grant DE-SC0010339 and by NSF DMS grants 1613025 and 2008844.}
The authors    thank    the    anonymous    reviewers    for    their    careful    reading    of    our    manuscript    and    their    many    insightful    comments    and    suggestions.


\newcommand{\etalchar}[1]{$^{#1}$}
\def\cprime{$'$}

\newpage
\appendix

\section{Matrix representations}

\subsection{The averaging operator $ \pp $, its adjoint $ \pp ^{\ast}$ and the projector $\proj  $ in Example \ref{ex:MainExample}}
The averaging operator $ \pp $ (see Definition \ref{def:radProjAvg}), its adjoint $ \pp ^{\ast}$ (see Proposition \ref{propForaveraging}), orthogonal projection $\proj  $ onto the subspace of radial functions $\ell^2_{rad}(G)$ (see Definition \ref{def:radProjAvg}) for Example \ref{ex:MainExample}:
\begin{equation*}
  \renewcommand{\arraystretch}{1.5}
   \pp =
  \left(
  \begin{array}{c c c  c c c c  c c c }
     &  \ddots   &  \vdots  &  \vdots &  \vdots  &  \vdots  & \vdots & \iddots  \\
   \ddots  & 1 & 0  & 0 & 0 & 0 & 0 &\dots\\
  \dots & 0 & 1 & 0 & 0 & 0 & 0 & \dots \\
  \dots &0 & 0 & \frac{1}{2} & \frac{1}{2} & 0 & 0 &  \dots  \\
    \dots &0 & 0 & 0 & 0 & 1 & 0 & \dots   \\
        \iddots  &0 & 0 & 0 & 0 & 0 & 1 & \ddots  \\
  & \iddots  & \vdots & \vdots & \vdots  &  \vdots  & \ddots\\
  \end{array}
  \right),
  \quad \quad
     \pp ^{\ast}=
  \left(
  \begin{array}{c c c  c c c c  c c c }
     &  \ddots   &   \vdots  &  \vdots  & \vdots & \iddots  \\
   \ddots  & 1 & 0  & 0   & 0 &\dots\\
  \dots & 0 & 1 & 0 &  0 & \dots \\
  \dots &0 & 0 & 1 &  0 &  \dots  \\
    \dots &0 & 0 &  1 & 0 & \dots   \\
        \iddots  &0 & 0 & 0 &  1 & \ddots  \\
  & \iddots  & \vdots  &  \vdots  & \ddots\\
  \end{array}
  \right)
\end{equation*}
\begin{equation*}
  \renewcommand{\arraystretch}{1.5}
  \proj  =
  \left(
  \begin{array}{c c c  c c c c  c c c }
     &  \ddots   &  \vdots  &  \vdots &  \vdots  &  \vdots  & \vdots & \iddots  \\
   \ddots  & 1 & 0  & 0 & 0 & 0 & 0 &\dots\\
  \dots & 0 & 1 & 0 & 0 & 0 & 0 & \dots \\
  \dots &0 & 0 & \frac{1}{2} & \frac{1}{2} & 0 & 0 &  \dots  \\
   \dots &0 & 0 & \frac{1}{2} & \frac{1}{2} & 0 & 0 &  \dots  \\
    \dots &0 & 0 & 0 & 0 & 1 & 0 & \dots   \\
        \iddots  &0 & 0 & 0 & 0 & 0 & 1 & \ddots  \\
  & \iddots  & \vdots & \vdots & \vdots  &  \vdots  & \ddots\\
  \end{array}
  \right).
\end{equation*}

\subsection{The radial Lax operator $\Plax_{rad}(t)$  in Example \ref{ex:MainExample}} 
\label{app:liftingLax}
Recall  $\PlaxZ (t) := [\jacobi(t)]_{+} - [\jacobi(t)]_{-}$, where  $[\jacobi(t)]_{+}$  and $ [\jacobi(t)]_{-}$ define the upper and lower triangular parts of $\jacobi(t)$ respectively. The corresponding radial Lax operator on $G$ is is directly computed via $ \Plax_{rad}(t) =  \pp ^{\ast}    \PlaxZ (t)  \pp $. We obtain
\begin{equation*}
  \renewcommand{\arraystretch}{1.5}
  \Plax_{rad}(t)=
  \left(
  \begin{array}{c c c  c c c c  c c c }
       \ddots     &  \vdots & \vdots &  \vdots  &  \vdots  & \vdots &  \vdots & \iddots  \\
  \dots  & 0 & a_{\integers, t}(-2,-1) & 0 & 0 & 0 & 0  &  \dots \\
   \cline{3-6} 
\dots    &  \multicolumn{1}{c|}{-a_{\integers, t}(-2,-1)} & 0 & \frac{a_{\integers, t}(-1,0)}{2} & \frac{a_{\integers, t}(-1,0)}{2} & \multicolumn{1}{c|}{0} & 0 &  \dots \\
 \dots     & \multicolumn{1}{c|}{0} & - a_{\integers, t}(-1,0) & 0 & 0 &  \multicolumn{1}{c|}{a_{\integers, t}(0,1)}  &  0 & \dots \\
 \dots     & \multicolumn{1}{c|}{0} & - a_{\integers, t}(-1,0) & 0 & 0 &  \multicolumn{1}{c|}{a_{\integers, t}(0,1)}  &  0 & \dots \\
  \dots  & \multicolumn{1}{c|}{0}   & 0 & -\frac{a_{\integers, t}(0,1)}{2} & -\frac{a_{\integers, t}(0,1)}{2} & \multicolumn{1}{c|}{0} &  a_{\integers, t}(1,2) & \dots \\ 
   \cline{3-6} 
  \dots  & 0 & 0 & 0 & 0 & - a_{\integers, t}(1,2) & 0 & \dots  \\
   \iddots  & \vdots & \vdots & \vdots &  \vdots &   \vdots &  \ddots\\
  \end{array}
  \right)
\end{equation*}
The radial Lax operator $\Plax_{rad}(t)$ is almost identical with $\PlaxZ (t)$, except at the rows and columns corresponding to the vertices $\{-1,0w_1,0w_2,1\}$. The corresponding submatrix is indicated inside a rectangle in the middle of the matrix.

\end{document}